\documentclass[12pt]{article}
\usepackage{amsmath, amssymb, amsthm}
\usepackage{geometry}
\usepackage[hidelinks]{hyperref}
\usepackage{enumitem}
\usepackage{parskip}
\usepackage{booktabs}
\usepackage{longtable}
\usepackage[table]{xcolor}

\newtheoremstyle{spaced}
  {12pt}       
  {12pt}       
  {\itshape}   
  {}           
  {\bfseries}  
  {.}          
  {.5em}       
  {}           
\theoremstyle{spaced}
\newtheorem{theorem}{Theorem}
\newtheorem{definition}{Definition}
\newtheorem{proposition}{Proposition}
\newtheorem{corollary}{Corollary}
\newtheorem{lemma}{Lemma}
\newtheorem{assumption}{Assumption}
\title{Institutional Harm Through\\ Threshold Cascades}
\author{Piper Harris \and Chad M. Topaz}
\date{\today}
\begin{document}
\maketitle
\begin{abstract}
Can a population of people not individually inclined to harm others nonetheless produce harmful collective outcomes, purely because of the institutional structure they inhabit?  Social scientists have long argued yes, but existing accounts are largely qualitative and provide no precise condition distinguishing safe institutions from unsafe ones.  We develop a threshold cascade model in which agents have positive activation thresholds, harmful behavior is irreversible, and the institution exerts both standing pressure and peer influence along a weighted network.  We give a necessary and sufficient condition, checkable from the institution's structure and its members' thresholds, for resistance to any shock up to a given size; the criterion extends to signed influence, in which some peer effects counteract harm, and yields a convex optimization formulation for least-cost repair.  It also reveals a sharp frontier between functionality and safety.  An institution can coordinate its members and remain safe if and only if the exposure that coordination creates stays below the weakest member's net threshold.  A further tension arises when coordination requires responsiveness to peer influence, which can make it impossible to prevent the most exposed group from cascading.  We then analyze a mean-field model of two groups differing in how easily their members are pushed into harm.  When one group is unstable in isolation but the system is stable under full mixing, disproportionate within-group influence creates a sharp homophily threshold beyond which the harm-free state becomes unstable.  In the model, identical treatment of both groups does not generally equalize their cascade robustness.
\end{abstract}
\newpage
\section{Introduction}
\label{sec:intro}

People are regularly harmed by institutions---by courts, police departments, banks, universities, and corporations, among others.
Sometimes the harm reflects the institution functioning exactly as designed; sometimes it reflects deviation from stated goals.  In either case, a large body of social science research suggests that harm often arises even when many of the individuals inside these institutions are not personally inclined toward it \cite{milgram1963, browning1992}.
If that premise is accepted, a critical question follows.  How does institutional harm emerge from a population that is not individually disposed to produce it?

One common answer is that the problem lies in bad actors within the institution.  If so, the remedy would seem simple.  Identify the bad actors and remove them.  The appeal is immediate, but the explanation does not hold up well.
The ``bad actor'' account says nothing about the structure of institutions themselves.  It does not explain why some institutions consistently produce harm and others do not, why harm falls along lines of race or class rather than striking randomly, or why replacing individual people so rarely fixes the problem.  Often institutions identify individual wrongdoers but decline to remove them, a refusal that is itself part of the structural pattern the bad-actor account does not address.

A substantial literature in social science argues that institutional structure, rather than individual malice, can itself generate systematic harm.  This idea is central to work on structural violence \cite{galtung1969}, organizational wrongdoing \cite{ashforth2003}, and normalization of deviance \cite{vaughan1990, vaughan1996}.  It also underpins research on street-level bureaucracy, in which frontline workers such as police officers, teachers, and welfare caseworkers exercise discretionary power over the people they serve \cite{lipsky1971, lipsky1980}.  These accounts make a powerful qualitative case.  Harmful outcomes can emerge from authority structures, role expectations, incentive systems, and organizational norms even when many of the individuals inside the institution are not personally committed to harm.

But qualitative diagnosis is not the same as mathematical control.  These accounts explain \emph{why} structure matters, but they do not by themselves tell us \emph{when} an external disturbance that pushes some members into harmful behavior will remain localized and \emph{when} it will spread through the institution.  Nor do they tell us \emph{what} structural changes would make an institution robust to such disturbances.  We aim to supply that missing formal layer.  The question we formalize is:

\begin{quote}
\emph{If the individual members of an institution are not inclined to harm, what structural properties of the institution can produce harmful collective outcomes?  And conversely, what structural properties prevent harm from spreading when a small group of agents is pushed into harmful behavior by an external shock?}
\end{quote}

We make three contributions.

The first is an exact theory for a finite population of agents.  We model an institution as a network in which each person has a personal threshold for engaging in harmful behavior, and harmful behavior, once adopted, is not reversed.  An external shock activates a small group of agents, and the question is whether the resulting peer pressure causes harm to spread through the network.  We give a precise, checkable condition on the institution's structure and its members' thresholds that determines whether it can withstand any shock up to a given size.

The second contribution uses this condition to study institutional repair and a tradeoff between functionality and safety.  Repair asks what is the least costly combination of reducing baseline pressure and weakening peer influence that restores safety.  We show this repair problem has a convex structure, meaning it can be solved efficiently without getting stuck at false solutions.  The tradeoff result says that requiring agents to be sufficiently responsive to one another---a basic condition for the institution to function---can be mathematically incompatible with preventing the institution's most exposed members from being cascaded into harmful behavior. 

The third contribution is a complementary large-population model.  We replace individual agents with continuous distributions in a mean-field approximation, following the standard modeling practice of analyzing the same phenomenon at both the agent level and the population level.  This approach lets us study two groups within the same institution and ask how differences in susceptibility to engaging in harm interact with the tendency of people to be more influenced by members of their own group.

The main mathematical novelty is a criterion that determines, from the institution's structure and its members' thresholds, whether any group of up to $k$ externally activated agents can trigger a cascade.  Because harmful behavior is not reversed once adopted, first-step prevention is sufficient.  If no non-seed agent activates at $t = 1$, no cascade ever starts.  The criterion uses the $k$ strongest incoming influences on each agent and gives an exact bound on worst-case first-step exposure.  It extends to settings where some peer influences are protective rather than harmful.

\section{Background}
\label{sec:background}

A \emph{threshold model} is a mathematical model in which each agent in a population has a personal threshold for adopting some behavior.  The agent adopts the behavior if and only if enough of their peers have already done so.  This idea was developed to explain collective social phenomena such as riots, bank runs, technology adoption, and norm change \cite{granovetter1978, schelling1971}.  Suppose there are $n$ agents, and each agent $i$ has a threshold $\tau_i \geq 0$.  Agent $i$ adopts the behavior when the number of adopters reaches or exceeds $\tau_i$.
In Granovetter's formulation, each peer who has adopted contributes equally---one unit---toward meeting $\tau_i$.  Our model (Section~\ref{sec:model}) generalizes this by allowing each peer $j$ to contribute a different amount of pressure $w_{ij}$, so that $\tau_i$ becomes a threshold on total weighted pressure rather than a simple head count.  Starting from some initial set of adopters, we ask: does adoption spread to the whole population, fizzle out, or stabilize at some intermediate level?  Granovetter's key insight was that the answer depends on the \emph{distribution} of thresholds, not just the average.  A population with a high average threshold can still cascade into full adoption if the distribution has the right shape, with enough agents at low thresholds to start the process and enough at moderate thresholds to sustain it.

A concrete example makes this precise.  Consider ten agents whose thresholds are
\begin{equation}
(\tau_1, \tau_2, \ldots, \tau_{10}) \;=\; (0,\; 1,\; 2,\; 3,\; 4,\; 5,\; 6,\; 7,\; 8,\; 9).
\end{equation}
Each agent watches the entire population and adopts the behavior as soon as the total number of current adopters meets or exceeds their personal threshold.  The average threshold is $4.5$.  At time $t = 0$, no one has adopted yet, but the agent with $\tau_1 = 0$ requires zero adopters, so that agent adopts immediately.  Now there is one adopter.  At $t = 1$, the agent with $\tau_2 = 1$ sees one adopter, which meets their threshold, so they adopt too.  Now there are two adopters.  At $t = 2$, the agent with $\tau_3 = 2$ sees two adopters and activates.  The process continues in the same way: at each step, the rising count of adopters crosses exactly one more agent's threshold, and the cascade rolls through the entire population in nine steps.  Despite an average threshold of $4.5$, the evenly spaced distribution creates an unbroken chain of activations.

Now consider a different distribution with the same average.  Ten agents have thresholds
\begin{equation}
(\tau_1, \tau_2, \ldots, \tau_{10}) \;=\; (0,\; 0,\; 0,\; 0,\; 0,\; 9,\; 9,\; 9,\; 9,\; 9),
\end{equation}
again averaging $4.5$.  At $t = 0$, the five agents with $\tau_i = 0$ all adopt immediately, producing five adopters.  But each of the remaining five agents has $\tau_i = 9$ and requires nine adopters to activate.  With only five adopters in a ten-person population, none of these thresholds can be met, at this step or any future step.  The cascade stalls permanently at five out of ten---half the population---compared to the complete cascade in the first example.  The average threshold is the same in both cases, but the shape of the distribution determines the outcome.  The evenly spaced distribution creates a ladder that the cascade can climb one rung at a time; the bimodal distribution creates a gap that the cascade cannot cross.

The dynamics of a threshold model have a natural analogy in epidemiology.  In a standard epidemic model, the key quantity is the \emph{basic reproduction number} $R_0$: the expected number of secondary infections produced by a single infected individual in an otherwise susceptible population.  If $R_0 > 1$, the infection spreads exponentially, at least initially.  If $R_0 < 1$, it dies out.  We will derive an analogous threshold condition for institutional harm, a quantity $\lambda_0$ that plays the role of $R_0$ in our mean-field approximation.

The mathematical study of threshold-driven cascades now spans several distinct research streams, and our framework builds on, and differs from, each.  The five streams most relevant to our work are well-mixed threshold models, network cascade models, simple-versus-complex contagion, informational cascades and strategic adoption, and influence maximization.
The first stream, originating with Granovetter~\cite{granovetter1978} and Schelling~\cite{schelling1971}, asks how the distribution of individual thresholds determines collective outcomes.  In this line of work, agents sit in a well-mixed population, each watches the aggregate adoption level, and the question is whether a small seed of adopters triggers a chain reaction.  Our finite-agent model inherits this basic logic but replaces the well-mixed assumption with an explicit influence matrix $W$ that encodes who affects whom and by how much.

The second stream places threshold dynamics on networks, where who is connected to whom matters.  On random graphs, there is a \emph{cascade window}: a range of network connectivity within which a single adopter can trigger a population-wide cascade.  Below this window, the network is too sparse to transmit influence; above it, the network is so dense that no single adoption can push any neighbor over the edge \cite{watts2002}.  A related game-theoretic analysis considers settings in which each agent's best course of action depends on what their neighbors are doing, so that adopting a behavior is rational only when enough neighbors have already done so.  Whether contagion can then spread from a small initial group to the entire network depends on a threshold quantity determined by the pattern of local connections \cite{morris2000}.  Both of these results apply to networks in which every connection has the same strength.  The model of \cite{watts2002} allows heterogeneous fractional thresholds but assumes uniform edge weights, and the analysis of \cite{morris2000} uses identical thresholds.  Our Theorem~\ref{thm:topk} extends this line of work to networks with heterogeneous weights and heterogeneous absolute thresholds simultaneously, and provides a guarantee that applies to any group of up to $k$ initial adopters, not just a single one.

A third stream concerns the distinction between simple and complex contagion.  Many social phenomena require reinforcement from multiple independent sources before an agent adopts, unlike biological diseases, where a single contact may suffice \cite{centola2007}.  This distinction matters for network structure: long-range ties that accelerate simple contagion may be ineffective or even harmful for complex contagion, because they deliver single exposures that fall below the reinforcement threshold.  A generalized contagion model smoothly bridges these two extremes by letting agents accumulate exposure over a window of recent contacts, with different agents requiring different amounts of cumulative exposure before they adopt \cite{dodds2005}.  At one extreme the model reproduces disease-like contagion in which a single contact can infect; at the other it reproduces threshold-driven contagion requiring many reinforcing contacts.  Between these extremes, the model identifies three qualitatively distinct classes of spreading behavior.  Our model sits at the threshold end of this spectrum: an agent activates only when the total influence pressure from active peers exceeds a personal threshold, and once activated, the agent does not revert.

A fourth stream addresses the strategic and economic side of cascades.  In models of \emph{informational cascades} and \emph{herd behavior}, agents observe predecessors acting a certain way, conclude those predecessors must know something, and follow suit regardless of their own private signals \cite{bikhchandani1992, banerjee1992}.  This mechanism is distinct from threshold-based adoption but produces similar collective dynamics.  The key difference is that informational cascades are fragile.  A small amount of public information can shatter them.  In our model, by contrast, cascades are self-reinforcing and irreversible.  Once an agent activates, they remain active, and the question is what structural conditions prevent the cascade from starting rather than what breaks it after it has begun.

A separate line of work in this fourth stream studies how behavior spreads when each agent's incentive to adopt depends on how many of their connections have already done so \cite{jackson2007}.  The distribution of connection counts across agents plays a decisive role.  Networks mixing highly connected and sparsely connected agents behave very differently from uniform networks, and this heterogeneity determines when a small wave of adoption tips into a large one.  This analysis connects thematically to our mean-field extension in Section~\ref{sec:mean-field-fairness}, though our mean-field model assumes uniform mixing.  In our framework, heterogeneity comes from the threshold distribution, not from degree variation, and we focus on the conditions under which the harm-free state first becomes unstable.

A fifth stream concerns influence maximization.  If you can choose a small number of agents to activate first, which ones should you pick to make the behavior spread as widely as possible?  This selection problem is computationally intractable in general---there is no known efficient algorithm that finds the optimal set.  Nevertheless, a natural greedy strategy that repeatedly picks the single agent who adds the most new activations comes with a provable performance guarantee \cite{kempe2003}.  Our work inverts this question.  Rather than choosing seeds to maximize spread, we ask what structural conditions prevent \emph{any} set of up to $k$ seeds from spreading.  Theorem~\ref{thm:topk} gives a checkable answer, and the repair theory of Section~\ref{sec:repair} asks how to modify institutional structure so that the answer is ``no cascade.''

Our contributions differ from existing results in three respects.  Existing results apply to networks with uniform or binary weights \cite{watts2002, morris2000}, to well-mixed populations without network structure \cite{granovetter1978, schelling1971}, or to seed-selection optimization on a fixed network \cite{kempe2003}.  None provides a checkable necessary-and-sufficient condition for whether an arbitrary weighted, heterogeneous network can withstand any shock of bounded size.  None asks how to \emph{redesign} the network to achieve robustness, or what structural tradeoff arises when agents must remain responsive enough to coordinate.  Our framework addresses all three gaps with an exact multi-seed robustness criterion (Theorem~\ref{thm:topk} and its signed extension, Theorem~\ref{thm:signed}), a convex repair theory (Theorem~\ref{thm:mixed-repair}), and a sharp coordination--robustness frontier (Theorem~\ref{thm:functionality}).  The mean-field two-group analysis (Section~\ref{sec:mean-field-fairness}) illustrates how the same ideas extend to large populations with group structure and homophily.

\section{The Model}
\label{sec:model}

Our model has four components.  First, each agent has a personal activation threshold that must be exceeded before they engage in harmful behavior.
Second, the institution exerts two kinds of pressure on each agent: a baseline pressure from rules, norms, and hierarchy, and peer influence transmitted through a weighted network.  Third, we impose a separation condition requiring that baseline pressure alone is never enough to push anyone into harm;
cascades can begin only when an external shock combines with peer influence.  Fourth, dynamics are absorbing, meaning that once an agent begins engaging in harm, they do not stop.%

We do not define ``harm'' narrowly.  The model applies whenever an institution can push its members into conduct that injures others, whether through discriminatory lending, excessive use of force, fraudulent reporting, or other forms of harmful conduct documented in Section~\ref{sec:intro}.  What matters formally is the binary distinction between engaging in harmful behavior and not doing so.

All notation used throughout the paper, in both the exact finite-agent analysis and the mean-field extension, is collected for reference in Appendix~\ref{app:notation}.

\subsection{Agents and the prosocial default}
\label{sec:agents}

Consider a population of $n$ agents.  We label them with the index set $V = \{1, 2, \ldots, n\}$, and we write ``agent $i$'' as shorthand for ``the agent labeled $i$.''
The agents represent the members of an institution: employees, officers, administrators, or other role-holders.

Each agent $i$
has an \emph{activation threshold} $\tau_i > 0$, defined as the minimum total pressure---from all sources, both institutional and social---that must be applied to agent $i$ before they engage in harmful behavior.  We decompose this total into baseline and social components in Section~\ref{sec:separation}.  Thresholds vary across agents, reflecting the fact that different people require different amounts of pressure to be pushed toward harm.  This variation is important because cascades may be initiated by low-threshold agents before spreading to higher-threshold ones.

The condition $\tau_i > 0$ for all agents is the \emph{prosocial assumption}:
\begin{assumption}[Prosocial Default]
\label{def:prosocial}
A population of agents satisfies the \emph{prosocial default} if each agent $i$ has an activation threshold $\tau_i > 0$: agent $i$ will not take a harmful action unless the total institutional pressure they experience meets or exceeds $\tau_i$.
\end{assumption}
The prosocial default says that harmful behavior requires a positive amount of institutional pressure to produce.  The threshold $\tau_i > 0$ means only that harm is not the default---some external push is required to get there.  This is compatible with purely self-interested agents, as long as harming others is not their default preference absent institutional pressure.  We treat the prosocial default as a domain-specific behavioral assumption, appropriate for institutional settings in which agents occupy formal roles.  Harm in these settings is an outcome of role behavior under institutional pressure rather than a private goal, and the relevant question is what a given person would do if the pressure were removed.

The prosocial default motivates the model behaviorally, but the theorems below depend on the stronger \emph{separation condition} $\theta_i = \tau_i - \sigma_i > 0$, where $\sigma_i$ is the baseline institutional pressure on agent $i$.  The separation condition says that baseline pressure alone is not enough to push anyone into harm.  We define $\sigma_i$ and the separation condition precisely in Sections~\ref{sec:structure} and~\ref{sec:separation} below, after introducing the institutional structure.

The prosocial default is a modeling assumption, not an empirical claim that every member of a real population has $\tau_i > 0$.  What matters for the model is whether the assumption is reasonable for the institutional setting under study.  Two lines of evidence bear on this question, one from developmental psychology and one from social psychology.  The developmental evidence speaks to the baseline: absent external pressure, is prosocial behavior the default?  The social-psychological evidence speaks to the mechanism: can institutional pressure override that default?

Experimental work in developmental psychology shows that spontaneous helping emerges early, before extensive formal socialization.  When 18-month-old infants watched an adult struggle to achieve a goal, they spontaneously helped in a majority of cases, without being asked, trained, or rewarded \cite{warneken2006}.  The tasks ranged from retrieving a dropped object to opening a cabinet to pulling something through a hole.  Twenty-two of twenty-four infants helped in at least one task, typically within seconds.  Infants helped significantly more in experimental conditions where the adult genuinely needed assistance than in matched control conditions where the adult acted intentionally, ruling out simple imitation.  A follow-up study found that extrinsic rewards \emph{undermined} this spontaneous helping: 20-month-olds who received material rewards for helping subsequently helped \emph{less} than those who received no reward or only verbal praise \cite{warneken2008}.  This overjustification effect---in the youngest sample in which it has been demonstrated---suggests that the earliest helping behaviors are intrinsically motivated and that external incentive structures can erode them.

Further experiments found that even 18-month-olds helped others not only with practical tasks but also in response to emotional distress and at personal cost, giving up their own possessions to comfort an adult \cite{svetlova2010}.  These studies support a narrower claim: spontaneous helping emerges early in development and is sensitive to external incentives.  They motivate the modeling assumption without proving it.

A separate body of evidence from social psychology suggests that institutional structure can push people whose baseline is prosocial into harmful behavior.  Milgram's obedience experiments \cite{milgram1963} found that many subjects administered what they believed were painful electric shocks to strangers when instructed by an authority figure.  Both the ethics of the studies \cite{baumrind1964} and their interpretation \cite{perry2013} have been contested, but the core behavioral finding---that situational authority pressure can elicit harmful compliance from ordinary people---has been replicated in modified designs \cite{burger2009}.  In the language of our model, the finding is that $\tau_i > 0$ (subjects did not harm spontaneously) but that sufficient institutional pressure crossed the threshold.

Browning's study of a German reserve police battalion involved in mass killings during World War II \cite{browning1992} makes a similar case under extreme conditions.  The battalion members were mostly middle-aged, working-class German men, and the study identifies authority, conformity, peer pressure, and institutional context as the activating mechanisms; diffusion of responsibility, in the sense made precise by Darley and Latan\'{e} \cite{darley1968}, is part of the picture.  The Stanford Prison Experiment \cite{haney1973} is sometimes invoked here as well, but its scientific validity is contested \cite{griggs2014, letexier2019}, and we do not rely on it as evidence.

None of this evidence proves that any given real population satisfies the prosocial default.  Developmental studies suggest spontaneous helping is an early-emerging behavior; social-psychological studies suggest institutional pressure can override it.  Together they make the prosocial default a defensible modeling assumption for institutional settings in which agents occupy formal roles and harmful behavior arises from role performance under pressure rather than from private disposition.

The social-psychological evidence involves subjects who recognized the harm they were causing, so it speaks most directly to settings of conscious compliance under pressure.  The activation threshold $\tau_i$ measures resistance to engaging in harmful behavior whether or not the agent recognizes it as harmful, so the framework also applies to normalized harm where agents cross the threshold without awareness.

\subsection{Institutional structure}
\label{sec:structure}

An institution exerts two distinct types of pressure on its members.  \emph{Baseline pressure} comes from the rules, incentives, hierarchy, and culture of the institution itself, independent of what any specific colleague is currently doing.  \emph{Social influence pressure} is the additional pressure an agent experiences when they observe colleagues engaging in harmful behavior.  We formalize these as follows.

\begin{definition}[Institutional Structure]
\label{def:structure}
An \emph{institutional structure} (or simply \emph{structure}) is a pair $\mathcal{S} = (\sigma, W)$, where:
\begin{enumerate}[label=(\roman*)]
    \item $\sigma = (\sigma_1, \ldots, \sigma_n) \in \mathbb{R}_{\geq 0}^n$ is the \emph{baseline pressure vector}.  The component $\sigma_i \geq 0$ is the institutional pressure on agent $i$ from rules, incentives, and hierarchy, independent of the current behavior of other agents.
    \item $W \in \mathbb{R}_{\geq 0}^{n \times n}$ is the \emph{influence matrix}, with $w_{ii} = 0$ for all $i$.  The entry $w_{ij} \geq 0$ is the increase in pressure on agent $i$ when agent $j$ is engaging in harmful behavior; when $j$ is not, agent $j$ contributes zero pressure to $i$.
\end{enumerate}
\end{definition}

The entry $w_{ij}$ measures how much agent $j$'s behavior affects agent $i$.  Note the index order: the \emph{row} index $i$ identifies the agent who \emph{receives} pressure, and the \emph{column} index $j$ identifies the agent who \emph{generates} it.  Column $j$ of $W$ encodes agent $j$'s ``influence profile'': how much pressure $j$'s harmful behavior creates for every other agent.  We set the diagonal entries $w_{ii} = 0$ for all $i$, since an agent does not exert social influence pressure on themselves.

In a hierarchical institution, a supervisor's column in $W$ has many large entries.  In a flat institution, no single column dominates.  The column structure of $W$ encodes the institution's authority and visibility structure.

We initially assume $w_{ij} \geq 0$ for all $i, j$,
so that influence only pushes agents toward harm.  Real institutions also contain protective pressures such as oversight, whistleblowing norms, and peer restraint.  We extend the model to signed influence in Section~\ref{sec:signed}, where the core exact results survive.  We present the nonnegative case first for clarity.

All model quantities---thresholds, pressures, influence weights---are defined relative to a fixed institution and a specific type of harmful behavior.  The threshold $\tau_i$ is agent $i$'s resistance to \emph{that particular behavior in that particular institutional context}, not a universal measure of moral resilience.  A person might have a high threshold for one form of institutional wrongdoing and a low threshold for another.  The prosocial default ($\tau_i > 0$ for all $i$) is therefore a property of the model instance, not a blanket claim about the people in it.

\subsection{Net thresholds and the separation condition}
\label{sec:separation}

The activation threshold $\tau_i$ measures the total pressure needed to activate agent $i$, while the baseline pressure $\sigma_i$ is the portion of that pressure supplied by the institution itself, independent of peer behavior.  The difference $\theta_i = \tau_i - \sigma_i$, which we call the \emph{net threshold} of agent $i$, is the amount of additional social pressure---beyond what the institution already provides---that is needed to push agent $i$ into harmful behavior.
If $\theta_i > 0$, baseline pressure alone is insufficient and social contagion from active peers is also required.  If $\theta_i \leq 0$, the institution's baseline already exceeds the agent's threshold, meaning the agent would engage in harmful behavior even without any peer influence.

We say that a structure $\mathcal{S} = (\sigma, W)$ satisfies the \emph{separation condition} if $\theta_i > 0$ for every agent $i \in V$, so that no one is pushed into harm by baseline pressure alone.  We assume the separation condition throughout the paper unless otherwise noted.  Under separation, the state in which no agent engages in harmful behavior is a valid resting point of the system.  The central question is then: under what conditions can a small perturbation cause that resting point to unravel?

A clarification about what $\theta_i$ measures.  The net threshold is resistance to a specific harmful behavior under specific institutional pressure---not ``social vulnerability'' in the policy sense.  An agent with low $\theta_i$ has low resistance to being pushed into the modeled behavior by the modeled pressure; they may have substantial social power in other respects.  Milgram's subjects are the canonical example: ordinary middle-class adults, not socially vulnerable in any policy sense, who turned out to have low thresholds for following authority pressure into administering shocks.  Throughout the paper, ``low threshold'' should be read in this narrow technical sense, not as a claim about general vulnerability or social status.

\subsection{Dynamics}
\label{sec:dynamics}
Let $x_i(t) \in \{0,1\}$ record the state of agent $i$ at discrete time step $t = 0, 1, 2, \ldots$, and write $x(t) = (x_1(t), \ldots, x_n(t))$ for the full state vector.
The value $1$ means agent $i$ is engaging in harmful behavior at step $t$, and $0$ means they are not.

\begin{definition}[Seed Set]
\label{def:seed}
A \emph{seed set} $S_0 \subseteq V$ is a set of agents activated at time $t = 0$ by an exogenous shock.  Seed activation is \emph{absorbing}: $x_i(t) = 1$ for all $t \geq 0$ and all $i \in S_0$.
\end{definition}

The exogenous shock represents any event outside the model that pushes a small group of agents into harmful behavior before peer influence has a chance to operate.  In institutional settings, such shocks take concrete forms.  A policy directive from above may force a few agents into harmful compliance.  A leadership change may shift norms for those in the new leader's direct orbit.  An external crisis---budget cuts, a political mandate, a market shock---may make harmful shortcuts necessary for a subset of agents.  Or new members may import harmful practices from a previous institution.  The model does not need to explain what caused the initial activation, just as an epidemiological model does not need to explain the origin of the first infection.  Its purpose is to determine whether the initial activation spreads.

The dynamics after the shock are governed by two rules.  First, an inactive agent activates whenever the total pressure on them---baseline institutional pressure plus social influence from currently active peers---meets or exceeds their activation threshold.  Second, once an agent activates, we treat that activation as absorbing on the timescale of the cascade episode under study.  This is not a claim that harmful behavior is literally permanent in all institutional settings.  Rather, it captures settings in which harmful conduct, once initiated, tends to persist because it becomes normalized, socially reinforced, or costly to reverse \cite{ashforth2003, vaughan1996}.  A reversible extension is an important generalization, but it belongs to a different modeling class, nonprogressive threshold dynamics, and is left for future work (Section~\ref{sec:future}).

To state these rules formally, for each non-seed agent $i \notin S_0$, the update rule is:
\begin{equation}
\label{eq:dynamics}
x_i(t+1) = \max\!\Big\{x_i(t),\;\mathbf{1}\!\left[\sigma_i + \sum_{j=1}^n w_{ij}\, x_j(t) \;\geq\; \tau_i\right]\!\Big\},
\end{equation}
where $\mathbf{1}[\cdot]$ is the indicator function, equal to $1$ when its argument is true and $0$ otherwise.  Reading equation~\eqref{eq:dynamics} from the inside out: the sum $\sum_j w_{ij} x_j(t)$ totals the social influence pressure on agent $i$ from all currently active peers.  Adding the baseline pressure $\sigma_i$ gives the total pressure.  The indicator checks whether this total meets or exceeds the threshold $\tau_i$.  The outer $\max$ ensures that an agent who has already activated stays active.

Because $\tau_i = \theta_i + \sigma_i$, the baseline pressure $\sigma_i$ cancels from both sides of the activation inequality.  In terms of net thresholds, the activation condition for an inactive agent simplifies to
\begin{equation}
\label{eq:activation}
\sum_{j=1}^n w_{ij}\, x_j(t) \;\geq\; \theta_i:
\end{equation}
agent $i$ activates when the social influence pressure from active peers alone meets or exceeds $\theta_i$.  We write $w_i(t) = \sum_{j=1}^n w_{ij}\, x_j(t)$ for the total social pressure on agent $i$ at time $t$.

Because agents can only switch from inactive to active and never the reverse, the system is monotone: the set of active agents can only grow.  If no new agent activates at some time step, the state is unchanged.  Since the update rule depends only on the current state, no new agent will activate at any future step either---the system has reached a fixed point.  We write $x^*(S_0) = (x^*_1, \ldots, x^*_n)$ for this final state vector, where $x^*_i = 1$ if agent $i$ is active when the system stops and $x^*_i = 0$ otherwise.  Subsequently we drop the $S_0$ dependence and write simply $x^*$ when the seed set is clear from context.
Since there are only $n$ agents, at most $n$ new activations can occur before the system must stall, so a fixed point is always reached.
\begin{definition}[Cascade]
\label{def:cascade}
A \emph{cascade} from seed set $S_0$ occurs if $\sum_i x_i^* > |S_0|$: at least one non-seed agent has been activated.
\end{definition}

The present model tracks when institutional members become active in harmful role behavior under baseline pressure, peer influence, and exogenous seeding.  A full model of structural harm would require a second layer mapping internal activation to downstream outcomes for external populations---clients, communities, the public.  Building that second layer is an important future direction (Section~\ref{sec:future}).

\section{Cascade Analysis}
\label{sec:cascade}

Our first question is whether a single externally activated agent can trigger a cascade.
We start there, extend to seeds of bounded size, and then develop a mean-field approximation for large populations.

\subsection{Single-seed cascades}

Suppose $S_0 = \{j\}$, a single seed.  At $t = 0$, only agent $j$ is active: $x_j(0) = 1$ and $x_\ell(0) = 0$ for all $\ell \neq j$.  At $t = 1$, the social influence pressure on each non-seed agent $i \neq j$ is
\begin{equation}
\sum_{\ell=1}^n w_{i\ell}\, x_\ell(0) = w_{ij} \cdot 1 + \sum_{\ell \neq j} w_{i\ell} \cdot 0 = w_{ij}.
\end{equation}
So agent $i$ activates at $t = 1$ if and only if $w_{ij} \geq \theta_i$.

\begin{proposition}[No-cascade condition, single seed]
\label{prop:main}
Under separation, no cascade occurs from seed $\{j\}$ if and only if $w_{ij} < \theta_i$ for all $i \neq j$.  A sufficient condition for no cascade from \emph{any} single seed is that the largest off-diagonal entry of $W$ satisfies $\max_{i, j \in V,\, i \neq j} w_{ij} < \underline{\theta}$, where $\underline{\theta} = \min_i \theta_i$.
\end{proposition}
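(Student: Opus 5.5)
The plan is to prove the biconditional by reducing the whole cascade question to the first update step, as the discussion preceding the statement suggests. The key structural fact is monotonicity combined with the Markov property of \eqref{eq:dynamics}: the active set can only grow, and the update depends only on the current state. Hence if $x(1) = x(0)$, then $x(t) = x(0)$ for every $t \geq 0$, and the fixed point is $x^* = x(0)$. I would state this as a short preliminary observation, proved by induction on $t$. If $x(t) = x(0)$, then applying the same deterministic map gives $x(t+1) = x(1) = x(0)$.

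For the ``if'' direction, assume $w_{ij} < \theta_i$ for all $i \neq j$. By the computation just before the statement, the social pressure on a non-seed agent $i$ at $t = 0$ is exactly $w_{ij}$. By the reduced activation rule \eqref{eq:activation}, which is where separation and the cancellation of $\sigma_i$ are used, agent $i$ activates at $t=1$ if and only if $w_{ij} \geq \theta_i$. Under the hypothesis, no non-seed agent activates, so $x(1) = x(0)$. The preliminary observation then gives $x^* = x(0)$ and $\sum_i x_i^* = 1 = |S_0|$, so by Definition~\ref{def:cascade} there is no cascade.

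For the ``only if'' direction, I would argue the contrapositive. Suppose some $i \neq j$ has $w_{ij} \geq \theta_i$. Then $x_i(1) = 1$, and by absorption $x_i^* = 1$. The seed $j$ is also active at the fixed point, so $\sum_\ell x_\ell^* \geq 2 > |S_0|$ and a cascade occurs.

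The sufficient condition follows immediately. If $\max_{i \neq j} w_{ij} < \underline{\theta}$, then for every $j$ and every $i \neq j$ we have $w_{ij} < \underline{\theta} \leq \theta_i$. The first part therefore applies to every single seed $\{j\}$.

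I do not expect a genuine obstacle. The only step needing care is the preliminary ``stall implies permanent stall'' argument. I would make it explicit rather than appeal to the informal discussion in Section~\ref{sec:dynamics}. I would also note that separation ($\theta_i > 0$) is what guarantees that the empty configuration outside the seed is itself stable. Without it, a non-seed agent with $\theta_i \leq 0$ would activate at $t = 1$ regardless of the weights, although formally the biconditional remains consistent, since $w_{ij} \geq 0 \geq \theta_i$ would then hold.
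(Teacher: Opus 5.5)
Your proposal is correct and follows essentially the same route as the paper. In both, the first-step pressure on each non-seed is $w_{ij}$; no first-step activation plus monotonicity gives $x^* = x(0)$; a first-step activation is itself a cascade; and the bound $\max_{i\neq j} w_{ij} < \underline{\theta} \le \theta_i$ reduces the second claim to the first. Your explicit induction for ``stall implies permanent stall'' simply spells out the monotonicity step the paper invokes.
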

\begin{proof}
The proposition makes two claims.  The first is a necessary and sufficient condition for a specific seed $\{j\}$; the second is a sufficient condition that works for every single seed simultaneously.

\emph{First claim.}  Suppose $w_{ij} < \theta_i$ for all $i \neq j$; we show no cascade occurs.
At $t = 0$ only the seed $j$ is active, so the pressure on each non-seed agent $i$ during the first update is $\sum_{\ell} w_{i\ell} x_\ell(0) = w_{ij} < \theta_i$.  Hence no non-seed activates at $t = 1$: $x(1) = x(0)$.  By monotonicity, $x(t) = x(0)$ for all $t \geq 0$, so $x^* = x(0)$ and no cascade occurs.

Conversely, if $w_{ij} \geq \theta_i$ for some $i \neq j$, then agent $i$ activates at $t = 1$, which constitutes a cascade.

\emph{Second claim.}  Suppose $\max_{i, j \in V,\, i \neq j} w_{ij} < \underline{\theta} = \min_i \theta_i$.  Then for every possible seed $j$ and every non-seed agent $i$, we have $w_{ij} \leq \max_{i', j' \in V,\, i' \neq j'} w_{i'j'} < \underline{\theta} \leq \theta_i$, so the first claim applies and no cascade occurs from any single seed.
\end{proof}

As a special case, if all influence weights are bounded by a common value $\bar{w}$---that is, $w_{ij} \leq \bar{w}$ for all $i \neq j$---and $\bar{w} < \underline{\theta}$, then no single-agent seed triggers a cascade.

\subsection{Multi-seed cascades}
\label{sec:topk}

For each agent $i$ and positive integer $k \leq n-1$, define the \emph{top-$k$ incoming influence sum}:
\begin{equation}
T_k(i) = \sum_{\ell=1}^{k} w_{i(\ell)},
\end{equation}
where $w_{i(1)} \geq w_{i(2)} \geq \cdots \geq w_{i(n-1)}$ are the entries $\{w_{ij} : j \neq i\}$ sorted in nonincreasing order.  For example, if row $i$ of $W$ has off-diagonal entries $(0.9, 0.5, 0.2, 0.1)$, then $T_1(i) = 0.9$, $T_2(i) = 0.9 + 0.5 = 1.4$, and $T_3(i) = 0.9 + 0.5 + 0.2 = 1.6$.

\begin{definition}[$k$-robustness]
\label{def:k-robust}
An institution is \emph{$k$-robust} if no cascade occurs from any seed set $S_0 \subseteq V$ with $|S_0| \leq k$.
\end{definition}

\begin{theorem}[Exact robust $k$-seed criterion]%
\label{thm:topk}
Let $k \leq n-1$ be a positive integer.  Under separation, no cascade occurs from any seed set $S_0$ with $|S_0| \leq k$ if and only if $T_k(i) < \theta_i$ for every $i \in V$.
\end{theorem}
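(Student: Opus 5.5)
The plan is to follow the structure of the proof of Proposition~\ref{prop:main}: reduce everything to the first update step, then treat the two directions separately. The key observation is the one stated in the introduction. Because activation is absorbing and the update rule depends only on the current state, if no non-seed agent activates at $t=1$ then $x(1)=x(0)$. By the monotonicity argument of Section~\ref{sec:dynamics}, $x(t)=x(0)$ for all $t$, so no cascade occurs. Conversely, any non-seed activation at $t=1$ is already a cascade by Definition~\ref{def:cascade}. So for a given seed set $S_0$, a cascade occurs if and only if some $i\notin S_0$ satisfies $\sum_{j\in S_0} w_{ij}\ge\theta_i$. This is the activation condition \eqref{eq:activation} evaluated at $x(0)=\mathbf 1_{S_0}$.

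\emph{Sufficiency.} Assume $T_k(i)<\theta_i$ for every $i$, and fix any $S_0$ with $|S_0|\le k$ and any non-seed $i\notin S_0$. The first-step pressure on $i$ is $\sum_{j\in S_0} w_{ij}$. This is a sum of at most $k$ off-diagonal entries of row $i$, since $i\notin S_0$ so the index $i$ never appears. Two facts then bound it by $T_k(i)$. First, any sum of $m$ distinct off-diagonal entries of row $i$ is at most the sum of the $m$ largest, which is $T_m(i)$. Second, $T_m(i)\le T_k(i)$ for $m\le k$, because $W\ge 0$, so appending further sorted entries cannot decrease the sum. Hence the pressure is at most $T_k(i)<\theta_i$, no non-seed activates at $t=1$, and by the reduction above there is no cascade.

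\emph{Necessity.} Suppose $T_k(i^\*)\ge\theta_{i^\*}$ for some agent $i^\*$. Let $S_0$ consist of $k$ indices $j\ne i^\*$ that realize the $k$ largest off-diagonal entries of row $i^\*$, with ties broken arbitrarily. This set exists because $k\le n-1$, and $|S_0|=k$. Since $i^\*\notin S_0$, agent $i^\*$ is a non-seed whose pressure at the first update is exactly $T_k(i^\*)\ge\theta_{i^\*}$. Therefore $i^\*$ activates at $t=1$, which is a cascade from a seed set of size at most $k$.

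I do not expect a genuine obstacle here. The only points needing care are bookkeeping. One is the role of nonnegativity in the inequality $T_m(i)\le T_k(i)$ for smaller seed sets; this is exactly what fails in the signed setting and will need revisiting for Theorem~\ref{thm:signed}. The other is checking that the adversarial seed set in the necessity direction excludes $i^\*$ itself and can be formed under the hypothesis $k\le n-1$.
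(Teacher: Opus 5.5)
Your proof is correct and follows the paper's argument essentially step for step. You reduce to the first update via absorbing dynamics, bound any at-most-$k$ seed sum in row $i$ by $T_k(i)$ for sufficiency, and seed the $k$ strongest influencers of a violating agent for necessity; your extra bookkeeping (nonnegativity giving $T_m(i)\le T_k(i)$ for $m\le k$, and $k\le n-1$ guaranteeing the adversarial seed set exists and excludes its target) only makes explicit what the paper leaves implicit.
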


\begin{proof}
Suppose first that $T_k(i) < \theta_i$ for every $i$; we show no cascade occurs from any seed set of size $\leq k$.  Let $S_0$ be any seed set with $|S_0| \leq k$.  At $t = 1$, the social influence pressure on any non-seed agent $i$ is
\begin{equation}
\sum_{j \in S_0} w_{ij}.
\end{equation}
This is a sum of at most $k$ entries from row $i$ of $W$.  Since $T_k(i)$ is defined as the sum of the $k$ \emph{largest} entries in row $i$ (excluding the diagonal), any subset of at most $k$ entries sums to at most $T_k(i)$.  Therefore
\begin{equation}
\sum_{j \in S_0} w_{ij} \;\leq\; T_k(i) \;<\; \theta_i,
\end{equation}
so agent $i$ does not activate at $t = 1$.  Since this holds for every non-seed agent $i$, we have $x(1) = x(0)$, and by monotonicity no cascade occurs.

Conversely, if $T_k(i_0) \geq \theta_{i_0}$ for some agent $i_0$, we exhibit a seed set of size $k$ that activates $i_0$.  Let $S_0$ consist of the $k$ agents $j \neq i_0$ with the largest values of $w_{i_0 j}$ (breaking ties arbitrarily).  By definition, $\sum_{j \in S_0} w_{i_0 j} = T_k(i_0) \geq \theta_{i_0}$, so $i_0$ activates at $t = 1$, which constitutes a cascade.
\end{proof}

To build intuition for the theorem, consider what the condition $T_k(i) < \theta_i$ says about a specific agent.  Agent $i$'s net threshold $\theta_i$ is the amount of social pressure needed to push them into harm.  The quantity $T_k(i)$ is the worst case: the maximum social pressure that any group of $k$ peers could jointly exert on agent $i$.  The theorem says the institution is safe against any $k$-agent shock if and only if, for every agent, even the worst-case coalition of $k$ peers cannot generate enough pressure to cross the threshold.

Two limiting cases illustrate the result.  In a flat organization where all influence weights are approximately equal, $w_{ij} \approx w$, the top-$k$ sum is approximately $T_k(i) \approx kw$.  The condition $T_k(i) < \theta_i$ then becomes $kw < \underline{\theta}$, where $\underline{\theta} = \min_i \theta_i$ is the smallest net threshold in the population.  The institution can therefore withstand any shock of size $k$ as long as $k < \underline{\theta}/w$.  The ratio of the weakest agent's safety margin to the common influence weight sets the robustness limit.  In a hierarchy, by contrast, some agents receive large influence from several powerful sources, making their $T_k(i)$ values much larger than the population average.  These agents are the binding constraint: the institution's vulnerability is determined not by the typical agent but by whoever faces the most concentrated incoming influence.

\subsection{Mean-field approximation}
\label{sec:mean-field}

Theorem~\ref{thm:topk} gives an exact answer for any finite institution.  But applying it requires knowing every entry of the influence matrix $W$, a level of detail often unavailable
for large organizations and one that obscures the qualitative picture.  We instead develop an approximation that operates at a different level of description.  We trade exactness for tractability by developing a \emph{mean-field approximation}, a simplified model that replaces the full network with a single aggregate
measure of social influence. The payoff is a scalar quantity $\lambda_0$ that plays the same role for institutional harm that the basic reproduction number $R_0$ plays in epidemiology.  When $\lambda_0 > 1$, the harm-free state is unstable and a cascade can grow from an arbitrarily small shock.  When $\lambda_0 < 1$, small shocks stay small, producing a harm level comparable to the seed fraction rather than an amplified cascade.

The mean-field model is formulated directly in terms of aggregate quantities---the coupling strength $\alpha$, the threshold distribution $F$, and the harm rate $H$---rather than derived from the finite model via a scaling limit.
This is standard practice in mathematical sociology and statistical physics, where agent-based models and mean-field models routinely coexist as complementary descriptions of the same phenomenon at different scales \cite{granovetter1978, watts2002}.

Suppose $n$ is large.  Let $p = |S_0|/n$ be the fraction of agents who are seeds.  Let $H(t) \in [0,1]$ denote the \emph{harm rate}: the fraction of all agents who are active at time $t$.
The approximation rests on two simplifying assumptions.

\begin{assumption}[Aggregate pressure]
\label{ass:aggregate-pressure}
In the mean-field approximation, the social pressure on each non-seed agent $i$ at time $t$ is the aggregate
\begin{equation}
\alpha \, H(t),
\end{equation}
where $\alpha > 0$ is a coupling constant measuring the overall strength of social influence in the institution and $H(t) \in [0,1]$ is the harm rate at time $t$.
\end{assumption}

In the exact model, the social pressure on agent $i$ is $\sum_j w_{ij} x_j$, a sum that depends on the full influence matrix and on exactly which agents are active.  In a large institution where no single agent dominates, this pressure depends less on \emph{who} is active and more on the overall \emph{proportion}
of active agents.  Assumption~\ref{ass:aggregate-pressure} replaces the agent-specific sum with the single aggregate quantity $\alpha H(t)$.

Three concrete readings make $\alpha$ less abstract.  At full activation ($H = 1$), the pressure on each agent equals $\alpha$, so $\alpha$ is the maximum possible peer pressure any non-active agent can feel.  In the uniform complete-graph case where $w_{ij} = w$ for $i \neq j$, the social pressure simplifies to $w \cdot \#\{\text{active}\} = w n H(t)$, identifying $\alpha = w n$ as the total peer influence each agent receives when all peers are active.  And in the mean-field, the pair $(\alpha, F)$ replaces the institutional structure $\mathcal{S} = (\sigma, W)$ from Definition~\ref{def:structure}.  Here $\alpha$ summarizes how the institution couples its members, $F$ summarizes the population's resistance distribution, and the standing pressure $\sigma$ has been folded into the net thresholds $\theta_i = \tau_i - \sigma_i$.

\begin{assumption}[Random thresholds]
\label{ass:random-thresholds}
The net thresholds $\theta_1, \ldots, \theta_n$ are independent copies of a single random variable $\theta$ with cumulative distribution function $F(x) = P(\theta \leq x)$.  The distribution has no mass at or below zero, $F(0) = 0$, so that separation transfers to the mean-field setting.  The density $f = F'$ is continuous on $(0, \infty)$ with finite right-hand limit at zero,
\begin{equation}
f(0^+) \;=\; \lim_{x \to 0^+} f(x).
\end{equation}
\end{assumption}

Rather than specifying each net threshold $\theta_i$ individually,
Assumption~\ref{ass:random-thresholds} treats them as iid draws from a common distribution.  $F(x)$ is the proportion of agents whose net threshold falls at or below $x$.
The quantity $f(0^+)$ from Assumption~\ref{ass:random-thresholds} has a concrete interpretation: it measures the density of agents whose net thresholds are near zero.  These are agents almost ready to be pushed into harmful behavior by even a small amount of social pressure.  Assumption~\ref{ass:random-thresholds} also makes $F$ continuous on $[0, \infty)$, since differentiability gives continuity on $(0, \infty)$ and right-continuity of the CDF together with $F(0) = 0$ gives continuity at zero.

Under the first assumption,
the activation condition $w_i(t) \geq \theta_i$ from the exact model simplifies to $\alpha H \geq \theta_i$, or equivalently $\theta_i \leq \alpha H$.  Each individual agent's activation is still deterministic: agent $i$ activates if and only if $\theta_i \leq \alpha H$.
What probability contributes is a \emph{counting argument}.  Out of all non-seed agents, the proportion whose thresholds fall at or below $\alpha H$ is $\#\{i \notin S_0 : \theta_i \leq \alpha H\}/(n - |S_0|)$.  Because the $\theta_i$ are independent draws from $F$, the law of large numbers guarantees that this proportion converges to $F(\alpha H)$
as $n$ grows.  So approximately a proportion $F(\alpha H)$ of non-seeds activate.
Adding the seed fraction $p$, the next harm rate is
\begin{equation}
\label{eq:update}
H(t+1) \;:=\; p + (1-p)\,F\bigl(\alpha\, H(t)\bigr).
\end{equation}

Equation~\eqref{eq:update} switches number systems.  In the exact finite model of Section~\ref{sec:dynamics}, $H(t) = \sum_i x_i(t)/n$ is rational by construction: it is a count of active agents divided by $n$, so $nH(t) \in \mathbb{Z}$ at every time step.  On the right-hand side of~\eqref{eq:update}, by contrast, the CDF $F$ is a real-valued function and $F(\alpha H(t)) \in [0,1] \subset \mathbb{R}$, so the update produces a real-valued $H(t+1)$ even when $H(t)$ is rational.  The mean-field model is therefore a real-valued dynamical system on $[0,1]$ that shares its domain of \emph{interpretation} with the exact model---both quantities represent the proportion of a population engaging in harmful behavior---but not its domain of \emph{values}.  Two approximation layers are at work.  First, the empirical CDF $\widehat{F}_n(x) = |\{i : \theta_i \leq x\}|/n$ (rational, piecewise constant) is replaced by the population CDF $F(x)$ (real, continuous); the law of large numbers controls this step.  Second, the literal count of non-seed activations $\lfloor (n - |S_0|)\, F(\alpha H) \rfloor$ is identified with the real number $(n - |S_0|)\, F(\alpha H)$, and we stop rounding.  From~\eqref{eq:update} onward, $H$, $g_p$, $H^*$, and $\lambda_0$ all live in $\mathbb{R}$.

Equation~\eqref{eq:update} gives the next harm rate from the current harm rate.  Fixed points and stability determine which harm rates the institution can settle into.  Define the \emph{mean-field update map}
\begin{equation}
g_p(h) \;:=\; p + (1-p)\,F(\alpha h).
\end{equation} 
When $p$ is clear from context we write simply $g$.
The map $g$ takes a candidate harm rate $h$ and returns the harm rate that the counting argument produces: the seed fraction $p$ plus the proportion of non-seeds whose thresholds are crossed at pressure level $\alpha h$.  A \emph{fixed point} is a self-consistent harm rate, a value $H^*$ satisfying
\begin{equation}
\label{eq:fixedpoint}
H^* = g(H^*) = \underbrace{p}_{\text{seeds}} + \underbrace{(1-p) \cdot F(\alpha H^*)}_{\text{activated non-seeds}}.
\end{equation}
If the harm rate ever reaches $H^*$, the update rule
reproduces exactly the same value.

Recall that a fixed point is \emph{locally stable} if, after a small perturbation, the dynamics return to it, and \emph{unstable} if small perturbations push the system away toward a different steady state.

\begin{proposition}[Mean-field cascade threshold]
\label{prop:scalar}
In the mean-field approximation:
\begin{enumerate}[label=(\roman*)]
    \item For any seed fraction $p \in [0, 1]$, at least one fixed point exists in $[p, 1]$.
    
    \item
    When $p = 0$, the harm-free state $H^* = 0$ is a fixed point.  Define $\lambda_0 = \alpha \cdot f(0^+)$.  When $\lambda_0 > 1$, the harm-free fixed point is unstable and a positive fixed point $H^*_+ > 0$ exists.  When $\lambda_0 < 1$, it is locally stable.
\end{enumerate}
\end{proposition}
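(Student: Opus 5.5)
The argument works with the continuous map $g_p(h) = p + (1-p)F(\alpha h)$ on $[0,1]$ and combines the intermediate value theorem with a first-order expansion at the origin. By Assumption~\ref{ass:random-thresholds}, $F$ is continuous on $[0,\infty)$, nondecreasing, and takes values in $[0,1]$. Hence $g_p$ is continuous and nondecreasing, and it maps $[p,1]$ into itself, since $p \le g_p(h) \le p + (1-p) = 1$.

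For part (i), set $\phi(h) = g_p(h) - h$ on $[p,1]$. Then $\phi(p) = (1-p)F(\alpha p) \ge 0$ and $\phi(1) = p + (1-p)F(\alpha) - 1 = -(1-p)\bigl(1 - F(\alpha)\bigr) \le 0$. The intermediate value theorem therefore gives a zero of $\phi$ in $[p,1]$. The degenerate case $p=1$ is covered as well, since then $g_1 \equiv 1$ and $H^*=1$.

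For part (ii), with $p=0$ we have $g_0(0) = F(0) = 0$, so $0$ is a fixed point. The key computation is the one-sided derivative at the origin: $g_0'(0^+) = \alpha f(0^+) = \lambda_0$. This follows from the mean value theorem, because $F(\alpha h) = \alpha h\, f(\xi_h)$ for some $\xi_h \in (0,\alpha h)$, and continuity of $f$ on $(0,\infty)$ together with the existence of the limit $f(0^+)$ gives $f(\xi_h) \to f(0^+)$. Since the domain is $[0,1]$, only perturbations $h>0$ matter.

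Suppose $\lambda_0 < 1$. Choose $\varepsilon>0$ with $\lambda_0 + \varepsilon < 1$. For all sufficiently small $h>0$ we then get $0 \le g_0(h) \le (\lambda_0+\varepsilon)h$. Iterates starting near $0$ decrease geometrically to $0$, so the harm-free state is locally stable.

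Suppose instead $\lambda_0 > 1$. There is $\delta>0$ with $g_0(h) \ge c h$ for some $c > 1$ on $(0,\delta]$. Any small positive perturbation is therefore pushed up by at least the factor $c$ at each step until it leaves $(0,\delta]$, so $0$ is unstable. For existence of $H^*_+$, note that $\phi(h) = g_0(h) - h > 0$ at $h=\delta$ and $\phi(1) \le 0$ as in part (i), so the intermediate value theorem gives a fixed point in $[\delta,1]$, which is positive. Alternatively, monotonicity of $g_0$ shows that the iterates from $\delta$ increase to a limit, and by continuity this limit is a fixed point $H^*_+ \ge \delta > 0$.

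The main point of care is the one-sided derivative step. $F$ is only assumed differentiable on $(0,\infty)$, so the linearization at $0$ has to be justified through the mean value theorem and the limit $f(0^+)$ rather than by invoking differentiability at $0$. I would also state explicitly that stability is one-sided, with perturbations in $[0,\delta)$, because $H$ cannot go negative. The boundary case $\lambda_0 = 1$ is not claimed and needs no treatment.
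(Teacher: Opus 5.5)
Your proof is correct and follows the paper's argument essentially step for step: the intermediate value theorem on $g_p(h)-h$ over $[p,1]$ for (i), and for (ii) a mean value theorem argument showing $g_0(h)/h \to \lambda_0$ as $h \downarrow 0$, followed by the intermediate value theorem on $g_0(h)-h$ to produce $H^*_+$. Your explicit bounds $g_0(h) \le (\lambda_0+\varepsilon)h$ and $g_0(h) \ge ch$ make the stability and instability claims a little sharper than the paper's ``perturbations shrink/grow'' phrasing. Your monotone-iteration alternative also shows that the orbit from $\delta$, and indeed from any small positive perturbation, converges to a positive fixed point when $\lambda_0>1$.
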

\begin{proof}
(i) We verify that $g$ maps $[p, 1]$ into itself.  At the left endpoint: $g(p) = p + (1-p) F(\alpha p) \geq p$ since $F \geq 0$.  At the right endpoint: $g(1) = p + (1-p) F(\alpha) \leq p + (1-p) = 1$ since $F \leq 1$.

Since $g$ is continuous and maps $[p,1]$ into itself, the intermediate value theorem applied to $g(h) - h$---which is $\geq 0$ at $h = p$ and $\leq 0$ at $h = 1$---gives a fixed point in $[p, 1]$.

(ii)
When $p = 0$, the fixed-point equation becomes $H^* = F(\alpha H^*)$.  Write $g_0(h) = F(\alpha h)$.  Since $F(0) = 0$ under separation, $g_0(0) = 0$, so $H^* = 0$ is always a fixed point.

To determine its stability, we compute $g_0'(0^+)$.  By the chain rule,
\begin{equation}
g_0'(h) = \frac{d}{dh} F(\alpha h) = \alpha \cdot f(\alpha h),
\end{equation}
so $g_0'(0^+) = \alpha \cdot f(0^+) = \lambda_0$.

If $\lambda_0 > 1$: then $g_0'(0^+) > 1$, meaning $g_0$ initially rises faster than the diagonal $h \mapsto h$.  By the mean value theorem, $g_0(h) = g_0'(\xi_h)\, h$ for some $\xi_h \in (0, h)$, and $g_0'(\xi_h) \to \lambda_0 > 1$ as $h \downarrow 0$, so for small $h > 0$ we have $g_0(h) > h$ and the dynamics push the harm rate upward.  But $g_0(1) = F(\alpha) \leq 1$, so $g_0(h) - h$ is positive near $h = 0$ and nonpositive at $h = 1$.  Since $g_0(h) - h$ is continuous and changes sign, the intermediate value theorem guarantees a zero: there exists $H^*_+ > 0$ with $g_0(H^*_+) = H^*_+$.  The harm-free fixed point $H^* = 0$ is unstable because small perturbations grow ($g_0(h) > h$ near 0).

If $\lambda_0 < 1$: then $g_0'(0^+) < 1$, meaning $g_0$ initially rises slower than the diagonal.  By the same argument, $g_0(h) < h$ for small $h > 0$, and perturbations from 0 shrink.  The harm-free fixed point is locally stable.
\end{proof}

The quantity $\lambda_0 = \alpha \cdot f(0^+)$ deserves careful interpretation, because it is the central diagnostic of the mean-field model.  It is the product of two factors.  The first, $\alpha$, measures how strongly the institution transmits social influence: a high $\alpha$ means that each agent's behavior weighs heavily on others.
The second, $f(0^+)$, measures how many agents are near the tipping point---those whose net thresholds are close to zero and who therefore need only a small push to begin engaging in harm.  The product $\lambda_0 = \alpha f(0^+)$ captures the feedback loop.  When one agent activates, the resulting social pressure recruits new agents in proportion to both how strongly pressure is transmitted ($\alpha$) and how many agents are susceptible ($f(0^+)$).

This is precisely the logic behind the basic reproduction number $R_0$ in epidemiology.  There, $R_0$ counts how many new infections a single infected individual generates in a fully susceptible population; when $R_0 > 1$, an epidemic can grow, and when $R_0 < 1$, outbreaks die out.  The cascade threshold $\lambda_0$ plays an analogous role at the level of local stability.  When $\lambda_0 > 1$, the harm-free state is locally unstable and a positive-harm fixed point exists; when $\lambda_0 < 1$, the harm-free state is locally stable.  The analogy is not exact.  Proposition~\ref{prop:scalar} establishes only local stability, not global convergence from arbitrary initial conditions.  An institution designer who can estimate $\alpha$ and the threshold distribution $F$ can therefore compute $\lambda_0$ and determine whether the institution is structurally vulnerable to cascading harm.

The conclusions of Proposition~\ref{prop:scalar}(ii)
are not artifacts of the idealized $p = 0$ limit.  They persist when a small fraction of agents are seeded.  We call an interior fixed point $H^*_+ \in (0, 1)$ of $g_0$ \emph{simple} if the curves $y = g_0(h)$
and $y = h$ cross transversally there, meaning $g_0'(H^*_+) \neq 1$.  Equivalently, the diagonal $y = h$ is not the tangent line to $y = g_0(h)$ at $H^*_+$: geometrically, the update curve slices through the diagonal rather than just touching it.

Before stating the persistence result, a word on the status of $g_0$ itself.  When $p = 0$ and $F(0) = 0$ under separation, the dynamics are trivial: $H(0) = 0$ and $H(t) = 0$ for all $t \geq 0$, so no non-seed agent ever activates.  The positive fixed points $H^*_+ > 0$ analyzed in Proposition~\ref{prop:scalar}(ii) are therefore not trajectories the system visits at $p = 0$.  They are features of the abstract map $g_0 \colon [0,1] \to [0,1]$, whose role is to organize the behavior of the nearby maps $g_p$ with $p > 0$---and those $g_p$ \emph{do} describe genuine physical situations, an institution subjected to an exogenous shock that activates a small fraction $p$ of agents.  The limit $p \downarrow 0$ is accordingly a thought experiment about the harm-free state's response to vanishingly small shocks, not a process in which any particular institution changes over time.  The corollary below makes the reference-map role precise: the fixed-point structure of $g_0$ persists under perturbation to $g_p$ for small $p > 0$.

\begin{corollary}[Persistence at small positive seed]%
\label{cor:persist}
If $\lambda_0 > 1$ and $H^*_+$ is a simple fixed point of $g_0$, then for small $p > 0$ a fixed point of $g_p$ exists near $H^*_+$.  If $\lambda_0 < 1$, fix any $\rho > 0$ such that $g_0(h) < h$ for all $h \in (0, \rho]$; such a $\rho$ exists because $g_0'(0^+) = \lambda_0 < 1$.  Then every fixed point $H^*(p)$ of $g_p$ in $(0, \rho]$ satisfies $H^*(p) \to 0$ as $p \downarrow 0$.
\end{corollary}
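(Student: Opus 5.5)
The plan is to view $g_p$ as a continuous perturbation of $g_0$ and to track how the zeros of $\phi_p(h) := g_p(h) - h$ move. Since $g_p(h) - g_0(h) = p\,(1 - F(\alpha h))$, we have
\begin{equation}
\sup_{h \in [0,1]} \bigl|g_p(h) - g_0(h)\bigr| \;\le\; p ,
\end{equation}
so $\phi_p \to \phi_0$ uniformly on $[0,1]$ as $p \downarrow 0$. Both claims then reduce to statements about zeros of a uniformly perturbed continuous function. I will avoid the implicit function theorem in $h$, because $f$ is only assumed continuous on $(0,\infty)$. A sign-change argument via the intermediate value theorem needs only continuity of $F$, which Assumption~\ref{ass:random-thresholds} supplies.

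\emph{First claim ($\lambda_0 > 1$).} Since $H^*_+ \in (0,1)$ is a simple fixed point, $\phi_0'(H^*_+) = \alpha f(\alpha H^*_+) - 1 \neq 0$, and $f$ is continuous at $\alpha H^*_+ > 0$. Hence $\phi_0$ is strictly monotone on a small interval $[H^*_+ - \delta, H^*_+ + \delta] \subset (0,1)$. It therefore takes values of opposite sign at the two endpoints, say with $\min\{|\phi_0(H^*_+ \pm \delta)|\} = m > 0$. For $p < m$ the uniform bound keeps those signs for $\phi_p$, so the intermediate value theorem gives a zero of $\phi_p$, that is, a fixed point of $g_p$, in $(H^*_+ - \delta, H^*_+ + \delta)$. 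Because $\delta$ can be taken arbitrarily small, with the corresponding $p$ threshold, this fixed point converges to $H^*_+$ as $p \downarrow 0$.

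\emph{Second claim ($\lambda_0 < 1$).} The existence of $\rho$ follows from the mean value argument already used in Proposition~\ref{prop:scalar}(ii). Now fix $\varepsilon \in (0, \rho)$. The function $h \mapsto h - g_0(h)$ is continuous and strictly positive on the compact set $[\varepsilon, \rho]$, so it has a positive minimum $m_\varepsilon$. For $p < m_\varepsilon$,
\begin{equation}
g_p(h) - h \;\le\; g_0(h) - h + p \;<\; 0 \quad \text{on } [\varepsilon,\rho].
\end{equation}
Thus every fixed point of $g_p$ in $(0,\rho]$ lies in $(0,\varepsilon)$. Since $\varepsilon$ is arbitrary, $H^*(p) \to 0$.

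The main obstacle is the regularity needed in the first claim. Simplicity is a derivative condition, and turning it into a strict sign change requires continuity of $f$ at the positive point $\alpha H^*_+$, which is granted. If one preferred to avoid derivatives entirely, one could argue that a transversal crossing forces $\phi_0$ to change sign across $H^*_+$, and that suffices. A minor point in the second claim is that $g_p(h) \ge p > 0$, so fixed points of $g_p$ actually lie in $[p,\varepsilon)$; this is consistent with the conclusion and needs no separate handling.
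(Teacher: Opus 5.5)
Your proof is correct and follows essentially the same steps as the paper. It uses the uniform bound $|g_p - g_0| \le p$, a sign change of $g_0(h) - h$ at test points $H^*_+ \pm \delta$ that survives small $p$ and is closed by the intermediate value theorem, and, for the second claim, a positive minimum of $h - g_0(h)$ on $[\varepsilon,\rho]$. Your use of continuity of $f$ at $\alpha H^*_+$ to justify the sign change, and your remark that the fixed point converges to $H^*_+$, fill in details the paper only asserts without changing the route.
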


When $\lambda_0 > 1$, even an infinitesimally small seed group sustains a positive level of harm.  The cascade equilibrium does not vanish just because the initial shock is tiny.  Conversely, when $\lambda_0 < 1$, the harm rate shrinks to zero as the seed fraction shrinks, confirming that the institution genuinely absorbs small shocks. 

\begin{proof}
The strategy is to show that the maps $g_p$ (with seeds) are close to the map $g_0$ (without seeds), so the fixed-point structure cannot change abruptly.

Write $g_p(h) = p + (1-p)F(\alpha h)$.  Since $|g_p(h) - g_0(h)| = p|1 - F(\alpha h)| \leq p$, the family $g_p$ converges uniformly to $g_0$ on $[0,1]$ as $p \downarrow 0$.
Informally, adding a tiny seed fraction shifts the update map by a tiny amount.

\emph{First claim.}  A transversal crossing of the curve $y = g_0(h)$ with the diagonal $y = h$ is structurally stable: a small perturbation of the curve cannot eliminate it.  Formally, because $g_0'(H^*_+) \neq 1$, the function $g_0(h) - h$ changes sign at $H^*_+$: there exists $\eta > 0$ small enough that $H^*_+ \pm \eta \in (0, 1)$ and the values $g_0(H^*_+ - \eta) - (H^*_+ - \eta)$ and $g_0(H^*_+ + \eta) - (H^*_+ + \eta)$ have opposite signs.  Let
\begin{equation}
\varepsilon_0 \;=\; \min\Bigl\{\bigl|g_0(H^*_+ - \eta) - (H^*_+ - \eta)\bigr|,\; \bigl|g_0(H^*_+ + \eta) - (H^*_+ + \eta)\bigr|\Bigr\},
\end{equation}
the smaller of the two distances from $g_0$ to the diagonal at the test points.  By uniform convergence, there exists $p_0 > 0$ such that $|g_p(h) - g_0(h)| < \varepsilon_0$ for all $h \in [0, 1]$ and all $p < p_0$.  In particular, $g_p(H^*_+ \pm \eta) - (H^*_+ \pm \eta)$ has the same sign as $g_0(H^*_+ \pm \eta) - (H^*_+ \pm \eta)$ for $p < p_0$.  So $g_p(h) - h$ is continuous on $[H^*_+ - \eta, H^*_+ + \eta]$ and changes sign on this interval.
By the intermediate value theorem, $g_p$ has a fixed point in $(H^*_+ - \eta, H^*_+ + \eta)$.

\emph{Second claim.}  Let $\rho$ be as in the statement and fix $\varepsilon \in (0, \rho)$.  The function $h - g_0(h)$ is continuous and strictly positive on the compact interval $[\varepsilon, \rho]$, so $c(\varepsilon) := \min_{h \in [\varepsilon, \rho]} \bigl(h - g_0(h)\bigr) > 0$.  For $p < c(\varepsilon)$, uniform convergence gives $g_p(h) \leq g_0(h) + p < h$ for all $h \in [\varepsilon, \rho]$, so $g_p$ has no fixed point in $[\varepsilon, \rho]$.  Any fixed point of $g_p$ in $(0, \rho]$ therefore satisfies $H^*(p) < \varepsilon$, and since $\varepsilon \in (0, \rho)$ was arbitrary, $H^*(p) \to 0$ as $p \downarrow 0$.
\end{proof}

\section{Signed Influence}
\label{sec:signed}

Up to this point, every entry of the influence matrix $W$ has been nonnegative---each peer relationship pushes a coworker toward engaging in the harmful behavior, never away from it.  But not all peer influence works in the same direction.  When one agent's misconduct draws sanctions, triggers oversight that reaches their peers, or provokes peer disapproval rather than emulation, the response can \emph{reduce} pressure on those nearby rather than amplify it.  We now extend the model to accommodate both kinds of peer influence, generalize the robustness criterion from Section~\ref{sec:topk}, and establish a convexity property of worst-case exposure that will be essential for the repair theory in Section~\ref{sec:repair}.

\subsection{The signed influence matrix}

Replace the nonnegative influence matrix $W$ with a \emph{signed influence matrix} $\mathcal{A} \in \mathbb{R}^{n \times n}$, retaining the zero-diagonal convention $a_{ii} = 0$.  For each pair of agents with $i \neq j$, let $w^+_{ij} \geq 0$ be the harm-promoting influence of $j$ on $i$ and $w^-_{ij} \geq 0$ be the protective influence of $j$ on $i$, so that
\begin{equation}
a_{ij} \;=\; w^+_{ij} - w^-_{ij}.
\end{equation}
The net entry $a_{ij}$ represents the \emph{total institutional effect} of $j$'s activation on $i$'s pressure, after accounting for both direct contagion and any induced countervailing mechanisms.  A negative entry $a_{ij} < 0$ does not mean that harmful behavior is intrinsically protective.  It means that once all institutional responses are accounted for---oversight triggered by $j$'s misconduct, sanctions activated by $j$'s visibility, heightened caution among $i$'s peers---the net effect of $j$'s activation on $i$'s pressure is protective.

The signed activation rule is that non-seed agent $i$ activates at time $t+1$ if not already active and
\begin{equation}
\label{eq:signed-dynamics}
\sigma_i + \sum_{j=1}^n a_{ij}\, x_j(t) \;\geq\; \tau_i, \qquad\text{equivalently}\qquad \sum_{j=1}^n a_{ij}\, x_j(t) \;\geq\; \theta_i.
\end{equation}
Activation remains absorbing, so once $x_i = 1$, it stays 1.  In the nonnegative model, the dynamics are \emph{monotone} in the sense that activating more agents can never reduce anyone's pressure.  Signed influence destroys this property.  Activating an agent $j$ with $a_{ij} < 0$ (net protective influence on $i$) \emph{decreases} pressure on $i$.  However, signed influence preserves the property that makes our exact theorem work.  If no non-seed agent activates at $t = 1$, the system has reached its fixed point.  The reasoning is the same induction used in Proposition~\ref{prop:main}.  The base case is $x(1) = x(0)$ (no non-seed activated).  For the inductive step, if $x(t) = x(0)$ then every agent sees the same state at time $t$ as at time $0$, so the update rule gives $x(t+1) = x(t) = x(0)$.  First-step prevention is therefore still sufficient to prevent all cascades.

\subsection{The signed robust \texorpdfstring{$k$}{k}-seed criterion}

For each agent $i$, define the \emph{positive part} of the signed influence:
\begin{equation}
a^+_{ij} \;=\; \max\{a_{ij},\, 0\}.
\end{equation}
This is the net harm-promoting influence of $j$ on $i$, after accounting for any protective effect.  Define the \emph{signed top-$k$ incoming influence sum}:
\begin{equation}
\widetilde{T}_k(i) = \sum_{\ell=1}^{k} a^+_{i(\ell)},
\end{equation}
where $a^+_{i(1)} \geq a^+_{i(2)} \geq \cdots$ are the positive parts $a^+_{ij}$ sorted in nonincreasing order.

\begin{theorem}[Signed exact robust $k$-seed criterion]
\label{thm:signed}
Let $1 \leq k \leq n-1$.  Under the separation condition, no cascade occurs from any seed set $S_0$ with $|S_0| \leq k$ if and only if $\widetilde{T}_k(i) < \theta_i$ for every $i \in V$.
\end{theorem}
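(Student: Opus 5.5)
The plan follows the proof of Theorem~\ref{thm:topk}, with two adjustments for signed entries: replacing $a_{ij}$ by its positive part, and handling the loss of monotonicity. The structural fact we rely on, established just before the statement, is that under signed influence first-step prevention still suffices. If no non-seed agent activates at $t=1$, then $x(1)=x(0)$, and induction gives $x(t)=x(0)$ for all $t$. So both directions reduce to what happens at the first update, where the pressure on a non-seed $i$ is exactly $\sum_{j\in S_0} a_{ij}$.

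\emph{Sufficiency.} Assume $\widetilde{T}_k(i)<\theta_i$ for all $i$ and take any $S_0$ with $|S_0|\le k$. For a non-seed $i$ we have $i\notin S_0$, so only off-diagonal entries appear. The key inequality is
\[
\sum_{j\in S_0} a_{ij}\;\le\;\sum_{j\in S_0} a^+_{ij}\;\le\;\widetilde{T}_k(i)\;<\;\theta_i .
\]
The first step holds because $a_{ij}\le a^+_{ij}$. The second holds because the $a^+_{ij}$ are nonnegative: a sum of at most $k$ of them is at most the sum of the $k$ largest. This is the only place where the positive part matters, since with raw signed entries a smaller subset could exceed the top-$k$ sum of signed values. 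Hence no non-seed activates at $t=1$, and by the first-step argument no cascade occurs. For the sorted list in $\widetilde{T}_k(i)$, I take the off-diagonal positive parts $\{a^+_{ij}: j\ne i\}$; there are $n-1\ge k$ of them, so the sum is well defined.

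\emph{Necessity.} This is the step needing care, because a naive top-$k$ seed set might include agents with negative influence on $i_0$. Suppose $\widetilde{T}_k(i_0)\ge\theta_{i_0}$ for some $i_0$. Let $S_0$ consist of the indices $j\ne i_0$ among the $k$ largest $a^+_{i_0 j}$ that satisfy $a_{i_0 j}>0$. Then $|S_0|\le k$, and every selected entry has $a_{i_0 j}=a^+_{i_0 j}$. The omitted entries among the top $k$ have $a^+_{i_0j}=0$ and contribute nothing, so
\[
\sum_{j\in S_0} a_{i_0 j}=\widetilde{T}_k(i_0)\ge\theta_{i_0}.
\]
Since $\theta_{i_0}>0$ by separation, this sum is positive, so $S_0$ is nonempty and $i_0\notin S_0$. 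Therefore $i_0$ is a non-seed that activates at $t=1$, which is a cascade. Because $k$-robustness quantifies over all seed sets of size at most $k$, using fewer than $k$ seeds is allowed. The main obstacle is exactly this choice to discard nonpositive entries, combined with using separation to guarantee the seed set is nonempty; the rest mirrors Theorem~\ref{thm:topk}.
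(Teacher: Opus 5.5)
Your proposal is correct and follows the paper's proof essentially step for step. Sufficiency uses the chain $\sum_{j\in S_0} a_{ij} \le \sum_{j\in S_0} a^+_{ij} \le \widetilde{T}_k(i) < \theta_i$ together with first-step prevention, and necessity seeds only the strictly positive entries among the top $k$ positive parts of row $i_0$. Your remark that separation forces that seed set to be nonempty is a harmless refinement that the paper leaves implicit.
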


\begin{proof}
First we show that if $\widetilde{T}_k(i) < \theta_i$ for every $i$, then no cascade occurs from any seed set of size $\leq k$.
Under this hypothesis, the proof of Theorem~\ref{thm:topk} carries over after replacing $w_{ij}$ with $a^+_{ij}$.  For any seed set $S_0$ with $|S_0| \leq k$ and any non-seed $i$, the first-step pressure on $i$ satisfies
\[
\sum_{j \in S_0} a_{ij} \;\leq\; \sum_{j \in S_0} a^+_{ij} \;\leq\; \widetilde{T}_{|S_0|}(i) \;\leq\; \widetilde{T}_k(i) \;<\; \theta_i,
\]
using (i) $a_{ij} \leq a^+_{ij}$, (ii) the fact that $\widetilde T_{|S_0|}(i)$ is the maximum sum of $|S_0|$ positive parts of row $i$, (iii) $|S_0| \leq k$ so the top-$|S_0|$ sum is bounded by the top-$k$ sum, and (iv) the hypothesis.  Hence no non-seed activates at $t = 1$, so $x(1) = x(0)$, and the same induction as in Theorem~\ref{thm:topk} extends this to all $t$.

Conversely, suppose $\widetilde{T}_k(i_0) \geq \theta_{i_0}$ for some agent $i_0$.  Let $S^\star$ contain the agents corresponding to the strictly positive entries among the $k$ largest positive parts of row $i_0$; then $|S^\star| \leq k$, and the omitted top-$k$ terms are zero.  Hence
\[
\sum_{j \in S^\star} a_{i_0 j} \;=\; \widetilde{T}_k(i_0) \;\geq\; \theta_{i_0},
\]
so $i_0$ activates at $t = 1$.
\end{proof}

The signed criterion has a natural institutional interpretation.  Protective influence enters through the netting operation.  If agent $j$ exerts both harm-promoting influence $w^+_{ij}$ and protective influence $w^-_{ij}$, only the positive part $a^+_{ij} = \max\{w^+_{ij} - w^-_{ij},\, 0\}$ feeds into $\widetilde T_k(i)$.  A peer whose protective effect on $i$ exceeds their harm-promoting effect has $a^+_{ij} = 0$ and contributes nothing to the worst-case pressure.  This captures a real limitation of protective influence.  It reduces exposure when seeds are random, but it cannot improve safety margins in the worst case, because the worst-case seed set leaves out the protective agents.

\subsection{Convexity of signed influence sums}
\label{sec:signed-convexity}

Theorem~\ref{thm:signed} tells us \emph{whether} an institution is safe under signed influence, but it does not tell us how to \emph{make} it safe.  In Section~\ref{sec:repair} we will formulate repair as an optimization problem---minimizing the cost of interventions that restore safety.  The tractability of that optimization depends on a structural property of the signed top-$k$ sum, which we establish here.  It is a \emph{convex} function.  Convexity matters because convex optimization problems have no spurious local minima, so off-the-shelf convex solvers can compute the globally optimal repair.

Recall from Theorem~\ref{thm:signed} that to determine whether agent $i$ is safe, we look at the $k$ largest \emph{positive} influences on $i$ and add them up.  Negative influences are ignored because they do not contribute to the worst-case pressure on $i$.  Any seed set seeking to maximize pressure on $i$ would skip agents with $a_{ij} \leq 0$, since such a seed adds zero or actively reduces the total.  The signed top-$k$ sum packages this operation into a single function.

For any vector $v = (v_1, \ldots, v_m)$, write $v^+ = \max\{v, 0\}$ for the componentwise positive part (replace each negative entry with zero), and let $v^+_{(1)} \geq v^+_{(2)} \geq \cdots$ be the positive parts sorted from largest to smallest.  Define
\begin{equation}
\Phi_k(v) \;=\; \sum_{\ell=1}^{k} v^+_{(\ell)},
\end{equation}
the sum of the $k$ largest positive components.  For example, if $v = (3, -1, 2, 0)$ and $k = 2$, the positive parts are $(3, 0, 2, 0)$, the two largest are $3$ and $2$, so $\Phi_2(v) = 5$.  The negative entry $-1$ is ignored entirely.

In our model, $v$ is the off-diagonal part of row $i$ of the signed influence matrix: the vector $(a_{ij})_{j \neq i}$ of all influences on agent $i$ from other agents.  So $\Phi_k$ applied to this vector gives $\widetilde{T}_k(i)$, the worst-case first-step pressure from Theorem~\ref{thm:signed}.  The function $\Phi_k$ generalizes $T_k$ from Section~\ref{sec:topk} to signed vectors.  Appending zero or nonpositive components to the argument does not change $\Phi_k$, since only positive parts enter the sum, so applying $\Phi_k$ to the off-diagonal part of a row or to the full row gives the same value.

\begin{proposition}[$\Phi_k$ is convex]
\label{prop:phik-convex}
For $1 \leq k \leq m$, the function $\Phi_k \colon \mathbb{R}^m \to \mathbb{R}$ is convex.
\end{proposition}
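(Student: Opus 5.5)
The plan is to write $\Phi_k$ as a pointwise maximum of linear functions; convexity then follows because a supremum of convex (here linear) functions is convex. Concretely, I would show that for every $v \in \mathbb{R}^m$,
\begin{equation}
\Phi_k(v) \;=\; \max_{S \subseteq \{1,\ldots,m\},\ |S| \leq k} \; \sum_{j \in S} v_j,
\end{equation}
with the empty set allowed, so that its sum $0$ is among the candidates. Each map $v \mapsto \sum_{j \in S} v_j$ is linear, and there are finitely many such $S$. So the right-hand side is a maximum of finitely many linear functions, and hence convex on $\mathbb{R}^m$.

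The only real work is verifying the identity, which mirrors the reasoning in the proof of Theorem~\ref{thm:signed}. For ``$\geq$'', take any $S$ with $|S| \leq k$. Then
\[
\sum_{j \in S} v_j \;\leq\; \sum_{j \in S} v_j^+ \;\leq\; \Phi_k(v).
\]
The first inequality holds because $v_j \leq v_j^+$. The second holds because the $v^+_j$ are nonnegative, so any at most $k$ of them sum to no more than the $k$ largest. For ``$\leq$'', let $S^\star$ be the indices of the strictly positive entries among the $k$ largest positive parts. Then $|S^\star| \leq k$, and each $j \in S^\star$ has $v_j = v_j^+$. The omitted top-$k$ terms are zero, so $\sum_{j \in S^\star} v_j = \Phi_k(v)$. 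If no entry is positive, $S^\star = \emptyset$ and both sides equal $0$.

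I expect the only point needing care to be the treatment of negative and zero entries, since it is what distinguishes $\Phi_k$ from the ordinary top-$k$ sum. Allowing $|S| < k$, including the empty set, is exactly what makes the positive-part truncation come out right. As a cross-check, one could also note the alternative route $\Phi_k(v) = \sum_{\ell \leq k} (v^+)_{(\ell)}$: the top-$k$ sum is convex and nondecreasing in each coordinate, and $v \mapsto v^+$ is componentwise convex, so the composition is convex. The max-of-linear argument above is self-contained and I would present that one.
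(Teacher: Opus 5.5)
Your proposal is correct and follows the paper's proof: both represent $\Phi_k$ as the maximum of $\sum_{j \in S} v_j$ over subsets $S$ with $|S| \leq k$ (empty set included), and conclude convexity because this is a pointwise maximum of finitely many linear functions. Your explicit two-sided verification of that identity, which reuses the argument from Theorem~\ref{thm:signed}, fills in a step the paper states in a single line.
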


\begin{proof}
The function $\Phi_k$ admits a finite combinatorial maximum representation:
\begin{equation}
\Phi_k(v) \;=\; \max\left\{\sum_{j \in S} v_j \;:\; S \subseteq \{1, \ldots, m\},\; |S| \leq k\right\}.
\end{equation}
The maximum on the right is attained by choosing $S$ to be the set of indices of the (at most $k$) largest positive components of $v$; indices with $v_j \leq 0$ are excluded because including them cannot increase the sum.  The right-hand side is a pointwise maximum of finitely many linear functions of $v$, one for each admissible subset $S$.  A pointwise maximum of convex functions is convex \cite{boyd2004}, and linear functions are convex, so $\Phi_k$ is convex.
\end{proof}

This convexity will play a key role when we formulate repair problems in Section~\ref{sec:repair}.

\section{Fairness and Impossibility}
\label{sec:impossibility}

The cascade framework so far has been group-neutral.
 It treats all agents as individuals, without asking whether the institution pushes different groups into harmful behavior at different rates.  In practice, not everyone faces the same baseline pressure or has the same net threshold.  Junior employees may face higher baselines than senior ones; frontline workers may be more susceptible to peer pressure than administrators.  The question is whether a single institutional design---one set of rules, one coordination structure, applied uniformly---can keep \emph{all} groups from being cascaded into harm.

This section shows that the answer is often \emph{no}.  We partition the institution into two groups, one more easily activated and one less so, and prove two results.  First, an exact finite-agent impossibility theorem (Section~\ref{sec:functionality}) shows that coordination requirements can cascade the more exposed group into harmful behavior while posing no threat to the more insulated group.  Second, a mean-field analysis (Section~\ref{sec:mean-field-fairness}) shows that when members are more influenced by their own group than by the other group---a phenomenon called \emph{homophily}---the institution can tip from safe to unsafe.  The destabilizing effect of homophily is monotone.  Every increase in within-group affinity moves the institution closer to a cascade, never further from one.

\subsection{Finite-agent impossibility}
\label{sec:functionality}

A central function of institutions is to coordinate their members, directing collective action, enforcing norms, and ensuring that individuals respond to one another \cite{march1958}.  For coordination to work, members must be responsive to peer influence---if no one listens to anyone else, the institution cannot function.  Mathematically, this means the influence matrix $W$ must have enough nonzero entries of sufficient size.  Each agent must be reachable by a small group of peers who can collectively exert enough pressure to direct them.

A key modeling assumption underlies what follows.  The influence channels that enable coordination are the \emph{same} channels through which harm spreads.  The weight $w_{ij}$ measures how much agent $j$'s behavior affects agent $i$---and this single channel serves both purposes.  When $j$ is doing their job well, the influence helps coordinate $i$; when $j$ is engaging in harmful behavior, the same influence pressures $i$ toward harm.  This assumption draws further support from the empirical settings of Section~\ref{sec:model}.  We introduced them to motivate the prosocial default, but they also illustrate the dual-use nature of influence channels.  In Milgram's experiments, the authority pressure that ordinarily coordinates institutional action was the same pressure that pushed subjects into harmful compliance.  In Browning's account, the command structure that organized the battalion's logistics was the same structure that transmitted pressure to participate in killings.

If coordination and harm-spreading used completely separate influence channels, there would be no tension between functionality and safety.  The impossibility we prove below arises precisely because the channels are shared, and the more responsive agents are to their peers, the easier it is for both coordination \emph{and} harm to propagate.  The question is whether an institution can be functional and safe at the same time.  Throughout this section we work with the original unsigned non-negative influence matrix $W$ from Section~\ref{sec:dynamics}; the signed extension of Section~\ref{sec:signed} is not used here.

To formalize this, we divide the agents into two groups, $G_A$ and $G_B$.  The labels are abstract.  $G_A$ and $G_B$ could represent senior versus junior employees, tenured versus untenured faculty, or any other partition.  Write
\begin{equation}
\underline{\theta}_{G_A} = \min_{i \in G_A} \theta_i, \qquad \underline{\theta}_{G_B} = \min_{i \in G_B} \theta_i
\end{equation}
for the net threshold of each group's most vulnerable member.  We assume $\underline{\theta}_{G_B} \leq \underline{\theta}_{G_A}$: the most easily activated member of $G_B$ is at least as vulnerable as the most easily activated member of $G_A$.  This is what we mean by calling $G_B$ the \emph{more exposed} group and $G_A$ the \emph{more insulated}
group.  The two groups may have overlapping threshold distributions; the distinction rests entirely on the comparison of their minimums.  In particular, the institution's globally most vulnerable agent sits in $G_B$ by definition, but $G_A$ may contain many agents with lower thresholds than many $G_B$ agents.

We formalize ``functional enough to coordinate'' as a minimum responsiveness requirement.  For the institution to direct any of its members, there must exist a small group of peers who can collectively exert enough influence on that member.

\begin{definition}[Minimum coordination requirement]
\label{def:coordination}
An institution satisfies the \emph{aggregate mobilization condition} $\mathcal{G}(r, \Lambda)$ if for every agent $i \in V$, there exists a \emph{coordination set} $C_i \subseteq V \setminus \{i\}$ with $|C_i| \leq r$ such that
\begin{equation}
\sum_{j \in C_i} w_{ij} \geq \Lambda.
\end{equation}
\end{definition}

In words, every agent can be directed by at most $r$ peers exerting combined influence at least $\Lambda$.  The parameter $\Lambda$ measures the \emph{strength} of coordination (how much pressure can be brought to bear) and $r$ measures its \emph{breadth} (how many peers are needed).  Note that $\mathcal{G}$ is monotonic in both parameters: an institution satisfying $\mathcal{G}(r, \Lambda)$ automatically satisfies $\mathcal{G}(r', \Lambda')$ for any $r' \geq r$ and $\Lambda' \leq \Lambda$, since both relaxations weaken the requirement.  For example, suppose an institution has five agents and, for each agent $i$, there exist two other agents in the institution whose combined influence weights on $i$ sum to at least $0.8$.  Then the institution satisfies $\mathcal{G}(2, 0.8)$.  We will see shortly that if an adversary can seed $k = 1$ agent, the adversary can force at least $(1/2) \cdot 0.8 = 0.4$ units of incoming pressure on every agent in the worst case.  If $k = 2$ or more, the full coordination strength $0.8$ is unavoidable.  These lower bounds are the \emph{exposure floor} that the coordination requirement imposes.

A natural default---and the case most relevant to fairness concerns---is to apply the \emph{same} coordination standard to everyone: the same $r$ and $\Lambda$ for both groups.  We call this a \emph{neutral} or \emph{uniform} design, meaning the institution does not tailor its coordination requirements to account for the fact that group $G_B$ has lower thresholds than group $G_A$.  The question is whether there exists any institutional design, any choice of influence matrix $W$, that simultaneously satisfies the coordination requirement and maintains $k$-robustness.  The answer is exact.

The argument has three steps.  First, we show that any design meeting the coordination requirement $\mathcal{G}(r,\Lambda)$ must expose every agent to at least a certain amount of worst-case pressure---a quantity we call the \emph{exposure floor} $\Gamma_{k,r}(\Lambda)$.  Second, we combine that lower bound with the exact $k$-robustness criterion of Theorem~\ref{thm:topk} to obtain a sharp frontier between functionality and safety.  Third, we apply the frontier to the two-group setting to show that a neutral coordination standard can be safe for the insulated group while being provably unsafe for the exposed group.

By Theorem~\ref{thm:topk}, $k$-robustness is controlled by the quantities $T_k(i)$.  The institution is $k$-robust if and only if $T_k(i) < \theta_i$ for every agent.  So if the coordination requirement imposes any unavoidable lower bound on $T_k(i)$, that lower bound will determine whether coordination and robustness can coexist.  The key observation is that such a lower bound exists.  The coordination condition forces a minimum value of $T_k(i)$ for every agent, regardless of how the institution distributes its influence weights.

\begin{lemma}[Coordination imposes an exposure floor]
\label{lem:exposure-floor}
Suppose the institution satisfies the coordination condition $\mathcal{G}(r, \Lambda)$, with $1 \leq r \leq n-1$.  Then for every agent $i$ and every integer $k$ with $1 \leq k \leq n-1$,%
\begin{equation}
T_k(i) \;\geq\; \Gamma_{k,r}(\Lambda),
\end{equation}
where the \emph{exposure floor} is
\begin{equation}
\label{eq:exposure-floor}
\Gamma_{k,r}(\Lambda) \;=\; \begin{cases} \dfrac{k}{r}\,\Lambda, & k < r, \\[6pt] \Lambda, & k \geq r. \end{cases}
\end{equation}
The bound is sharp---achievable as an equality. %
Equality holds when row $i$ of $W$ has exactly $r$ entries equal to $\Lambda/r$ and the rest equal to zero.
\end{lemma}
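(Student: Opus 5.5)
Fix an agent $i$ and let $C_i$ be the coordination set from $\mathcal{G}(r,\Lambda)$, so $|C_i|=s\le r$ and $\sum_{j\in C_i} w_{ij}\ge\Lambda$. The argument uses only the sorted row $w_{i(1)}\ge w_{i(2)}\ge\cdots\ge w_{i(n-1)}\ge 0$ and two elementary facts about top-$k$ sums. Fact (a): any set of at most $m$ off-diagonal entries sums to at most $T_m(i)$; this is the same observation used in the proof of Theorem~\ref{thm:topk}. Fact (b): the map $m\mapsto T_m(i)$ is nondecreasing, since entries are nonnegative, and concave, since its increments $w_{i(m)}$ are nonincreasing. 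From (a) and (b), $T_r(i)\ge T_s(i)\ge\sum_{j\in C_i}w_{ij}\ge\Lambda$.

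In the case $k\ge r$, monotonicity gives $T_k(i)\ge T_r(i)\ge\Lambda$ immediately. In the case $k<r$, I will show the averaging inequality $T_k(i)/k\ge T_r(i)/r$: the average of the $k$ largest entries is at least the average of the $r$ largest. To prove it, note that every one of $w_{i(1)},\dots,w_{i(k)}$ is at least $w_{i(k)}$, so $T_k(i)\ge k\,w_{i(k)}$. Also, each of the $r-k$ entries $w_{i(k+1)},\dots,w_{i(r)}$ is at most $w_{i(k)}\le T_k(i)/k$. Therefore $T_r(i)\le T_k(i)+(r-k)T_k(i)/k=(r/k)\,T_k(i)$, which rearranges to $T_k(i)\ge (k/r)T_r(i)\ge (k/r)\Lambda$. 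Together the two cases give $T_k(i)\ge\Gamma_{k,r}(\Lambda)$.

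For sharpness, take row $i$ with exactly $r$ entries equal to $\Lambda/r$ and the rest zero. This row satisfies the coordination requirement for $i$, using the $r$ positive entries as $C_i$. Its top-$k$ sum is $\min(k,r)\,\Lambda/r$, which equals $(k/r)\Lambda$ when $k<r$ and $\Lambda$ when $k\ge r$. To get a full matrix attaining the bound, use this construction in every row with zero diagonal, which is possible since $r\le n-1$. Every agent then meets $\mathcal{G}(r,\Lambda)$, and the floor holds with equality for all $i$.

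The only step that needs care is the averaging inequality in the case $k<r$. It is the one place where the sorting is used substantively rather than just monotonicity. I would present it by the explicit bound above rather than appealing to concavity abstractly, and I will make sure the bound $w_{i(\ell)}\le w_{i(k)}$ is invoked only for indices $\ell\le r\le n-1$, so that all entries referred to exist.
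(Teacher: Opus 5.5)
Your proof is correct and takes essentially the same route as the paper: it extracts a prefix-sum bound from the coordination set and then uses the fact that the average of the $k$ largest entries is at least the average of a longer prefix. The differences are cosmetic. You first lift to $T_r(i)\ge\Lambda$ and split on $k$ versus $r$, which matches the definition of $\Gamma_{k,r}$ directly, whereas the paper splits on $k$ versus $s=|C_i|$ and reaches the slightly stronger intermediate bound $(k/s)\Lambda$ before relaxing it to $(k/r)\Lambda$; you also prove the averaging inequality explicitly where the paper only states it.
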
 

The two cases have a simple interpretation.  When $k \geq r$, an adversary choosing $k$ seeds can include the entire coordination set, so the full coordination strength $\Lambda$ is available as exposure.  When $k < r$, the adversary can only use $k$ of the $r$ coordinating peers; the best the institution can do is spread the coordination load evenly across all $r$ peers, in which case the $k$ largest each contribute $\Lambda/r$, for a total of $(k/r)\Lambda$.

\begin{proof}
Fix an agent $i$.  By the coordination condition, there exists $C_i \subseteq V \setminus \{i\}$ with $|C_i| \leq r$ such that $\sum_{j \in C_i} w_{ij} \geq \Lambda$.  Let $s = |C_i| \leq r$, and write $w_{i(1)} \geq w_{i(2)} \geq \cdots$ for the off-diagonal entries of row $i$ sorted in decreasing order.  The top $s$ entries are at least as large as any chosen $s$-subset, so
\begin{equation}
\sum_{\ell=1}^{s} w_{i(\ell)} \;\geq\; \sum_{j \in C_i} w_{ij} \;\geq\; \Lambda.
\end{equation}

\emph{Case 1: $k \geq s$.}  Then $T_k(i) \geq \sum_{\ell=1}^{s} w_{i(\ell)} \geq \Lambda$, since adding more nonnegative terms cannot decrease the sum.  Since $\Gamma_{k,r}(\Lambda) \leq \Lambda$ by definition, we conclude $T_k(i) \geq \Gamma_{k,r}(\Lambda)$.

\emph{Case 2: $k < s$.}  The idea is that the only way to make the top $k$ entries as small as possible is to spread the coordination load $\Lambda$ as evenly as possible across all $s$ peers, and even then each of the top $k$ entries inherits at least a $(1/s)$-share.  Formally, since the entries are sorted in decreasing order, the average of the first $k$ entries is at least as large as the average of the first $s$ entries; dropping the smallest $s - k$ entries from the average can only raise it.  Therefore
\begin{equation}
\frac{1}{k}\sum_{\ell=1}^{k} w_{i(\ell)} \;\geq\; \frac{1}{s}\sum_{\ell=1}^{s} w_{i(\ell)} \;\geq\; \frac{\Lambda}{s}.
\end{equation}
Multiplying through by $k$ gives $T_k(i) \geq (k/s)\Lambda$.  Since $s \leq r$, we have $k/s \geq k/r$, so $T_k(i) \geq (k/r)\Lambda = \Gamma_{k,r}(\Lambda)$.

\emph{Sharpness.}  Take row $i$ of $W$ to have exactly $r$ entries equal to $\Lambda/r$ and the rest equal to zero.  The coordination condition is satisfied for $i$ (those $r$ entries sum to $\Lambda$), and $T_k(i) = \min(k, r) \cdot (\Lambda/r) = \Gamma_{k,r}(\Lambda)$.
\end{proof}

Applying this lemma row by row to the influence matrix $W$ yields an exact characterization of when coordination and robustness can coexist.

\begin{theorem}[Sharp coordination--robustness frontier]
\label{thm:functionality}
Let $1 \leq k \leq n-1$ and assume $n \geq r + 1$, so that each agent has at least $r$ distinct peers.
There exists an institutional design $W$ with $w_{ij} \geq 0$ for all $i,j$ satisfying both $\mathcal{G}(r, \Lambda)$ and $k$-robustness if and only if the exposure floor stays below the weakest agent's net threshold:
\begin{equation}
\label{eq:frontier}
\Gamma_{k,r}(\Lambda) \;<\; \underline{\theta} \;=\; \min_i \theta_i.
\end{equation}
\end{theorem}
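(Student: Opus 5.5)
The proof splits into necessity and sufficiency, and both directions reduce to Lemma~\ref{lem:exposure-floor} together with the exact criterion of Theorem~\ref{thm:topk}. Throughout, the thresholds $\theta_i$ are fixed and we assume separation; only $W$ is a design variable.

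\emph{Necessity.} Suppose some nonnegative $W$ satisfies both $\mathcal{G}(r,\Lambda)$ and $k$-robustness. Let $i^\ast$ be an agent attaining $\underline{\theta} = \min_i \theta_i$.
\begin{itemize}
\item By Theorem~\ref{thm:topk}, $k$-robustness is equivalent to $T_k(i) < \theta_i$ for every $i$. In particular $T_k(i^\ast) < \underline{\theta}$.
\item By Lemma~\ref{lem:exposure-floor}, applied to row $i^\ast$, we have $T_k(i^\ast) \geq \Gamma_{k,r}(\Lambda)$.
\end{itemize}
Chaining the two gives $\Gamma_{k,r}(\Lambda) < \underline{\theta}$. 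This direction is immediate.

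\emph{Sufficiency.} Assume \eqref{eq:frontier}. Build $W$ row by row from the sharpness construction in Lemma~\ref{lem:exposure-floor}. For each agent $i$, pick any $r$ distinct peers $C_i \subseteq V\setminus\{i\}$; this is possible because $n \geq r+1$. Set $w_{ij} = \Lambda/r$ for $j \in C_i$ and $w_{ij}=0$ otherwise, including $w_{ii}=0$. Then:
\begin{itemize}
\item The entries are nonnegative, and each row's coordination set $C_i$ has exactly $r$ members summing to $\Lambda$, so $\mathcal{G}(r,\Lambda)$ holds.
\item Each row's nonzero entries are $r$ copies of $\Lambda/r$, so $T_k(i) = \min(k,r)\,\Lambda/r$. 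This equals $\Gamma_{k,r}(\Lambda)$ in both cases: $k<r$ gives $(k/r)\Lambda$, and $k\geq r$ gives $\Lambda$.
\item Hence $T_k(i) = \Gamma_{k,r}(\Lambda) < \underline{\theta} \leq \theta_i$ for every $i$, and Theorem~\ref{thm:topk} gives $k$-robustness.
\end{itemize}

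There is no serious obstacle here. The points needing care are the hypothesis $n\geq r+1$, which guarantees $r$ distinct off-diagonal peers for the construction, and the hypothesis $k\leq n-1$, which Theorem~\ref{thm:topk} requires. One should also check that the construction meets the exact form of Definition~\ref{def:coordination}, which requires $|C_i|\leq r$. If $\Lambda \leq 0$ the statement is trivial (take $W=0$); otherwise the construction above applies directly.
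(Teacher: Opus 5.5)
Your proof is correct and matches the paper's argument. Necessity chains Lemma~\ref{lem:exposure-floor} with the ``only if'' direction of Theorem~\ref{thm:topk}, and sufficiency uses the same equal-weight construction ($r$ peers at weight $\Lambda/r$ each, which the paper happens to arrange in a cycle), whose top-$k$ sum is exactly $\Gamma_{k,r}(\Lambda)$; your separate handling of $\Lambda \le 0$ via $W = 0$ is a harmless refinement the paper leaves implicit.
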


\begin{proof}
\emph{Exposure floor.}  By Lemma~\ref{lem:exposure-floor}, $T_k(i) \geq \Gamma_{k,r}(\Lambda)$ for every agent $i$.

Suppose the institution satisfies both coordination $\mathcal{G}(r, \Lambda)$ and $k$-robustness.  We show $\Gamma_{k,r}(\Lambda) < \underline{\theta}$.  By $k$-robustness and Theorem~\ref{thm:topk}, $T_k(i) < \theta_i$ for all $i$.  Combining with the exposure floor gives $\Gamma_{k,r}(\Lambda) \leq T_k(i) < \theta_i$ for all $i$, hence $\Gamma_{k,r}(\Lambda) < \underline{\theta}$.

Conversely, assume $\Gamma_{k,r}(\Lambda) < \underline{\theta}$.  Construct a design in which each agent $i$ receives influence $\Lambda/r$ from each of $r$ designated peers---for instance, arrange the agents in a cycle and assign each agent the next $r$ agents as its coordination set.  Every other influence weight is zero.  In this construction, each row of $W$ has exactly $r$ nonzero off-diagonal entries, all equal to $\Lambda/r$, so the coordination condition $\mathcal{G}(r, \Lambda)$ holds.  For any set of $k$ seeds, the sum of their influence weights on agent $i$ is at most $\min(k, r) \cdot (\Lambda/r) = \Gamma_{k,r}(\Lambda)$---precisely the exposure floor.  Since $\Gamma_{k,r}(\Lambda) < \underline{\theta} \leq \theta_i$ for every agent $i$, $k$-robustness holds by Theorem~\ref{thm:topk}.
\end{proof}

The frontier condition~\eqref{eq:frontier} draws a sharp line.  On one side, the institution can be both functional and safe; on the other, no design can achieve both.  When $k \geq r$, the condition reduces to $\Lambda < \underline{\theta}$, meaning the coordination strength must stay below the weakest agent's net threshold.  When $k < r$, the condition is the weaker $(k/r)\Lambda < \underline{\theta}$, reflecting the fact that a smaller adversary cannot fully exploit the coordination structure.  In both cases, the frontier compares an unavoidable exposure level, $\Gamma_{k,r}(\Lambda)$, to the relevant threshold floor.

The frontier theorem is group-neutral.  It tells us whether coordination and robustness can coexist for the institution as a whole, using the global minimum threshold $\underline{\theta} = \min_i \theta_i$.  The fairness issue appears when different groups have different minimum thresholds.  Recall that $\underline{\theta}_{G_B} \leq \underline{\theta}_{G_A}$ by construction.  If the exposure floor $\Gamma_{k,r}(\Lambda)$ falls between these two values, the frontier condition is satisfiable for group $G_A$, since $\Gamma_{k,r}(\Lambda) < \underline{\theta}_{G_A}$, but not for group $G_B$, since $\Gamma_{k,r}(\Lambda) \geq \underline{\theta}_{G_B}$.  In other words, the same neutral coordination standard can be safe for the insulated group while being provably unsafe for the exposed group.  Say a design is \emph{$k$-robust for group $G$} if no seed set of size at most $k$ activates any non-seed agent of $G$.

\begin{corollary}[Neutrality gap]
\label{cor:neutrality-gap}
Let $1 \leq k \leq n-1$, suppose a neutral design imposes the same coordination requirement $\mathcal{G}(r, \Lambda)$ on all agents, and assume $|G_A| \geq r + 1$ and $|G_B| \geq r + 1$.  If
\begin{equation}
\underline{\theta}_{G_B} \;\leq\; \Gamma_{k,r}(\Lambda) \;<\; \underline{\theta}_{G_A},
\end{equation}
then no design satisfying that neutral coordination requirement can be $k$-robust for group $G_B$, but there exist designs satisfying that same requirement that are $k$-robust for group $G_A$.
\end{corollary}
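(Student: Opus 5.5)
The argument has two halves, an impossibility half for $G_B$ and a constructive half for $G_A$, both built on Lemma~\ref{lem:exposure-floor} and the first-step logic of Theorem~\ref{thm:topk}. I will first record the group-restricted version of the exact criterion. A design is $k$-robust for group $G$ if and only if $T_k(i) < \theta_i$ for every $i \in G$. The ``if'' direction is the first half of the proof of Theorem~\ref{thm:topk} restricted to rows $i \in G$. The point needing care is that a cascade in $G$ might start elsewhere and only later reach $G$; I will not need the full equivalence, only the two one-sided facts used below.

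\emph{Impossibility for $G_B$.} Let $i_0 \in G_B$ attain $\underline{\theta}_{G_B}$. For any $W$ satisfying $\mathcal{G}(r,\Lambda)$, Lemma~\ref{lem:exposure-floor} gives $T_k(i_0) \geq \Gamma_{k,r}(\Lambda) \geq \underline{\theta}_{G_B} = \theta_{i_0}$. Exactly as in the converse of Theorem~\ref{thm:topk}, I take $S_0$ to be the $k$ agents $j \neq i_0$ with the largest $w_{i_0 j}$. Then $i_0 \notin S_0$, the first-step pressure on $i_0$ is $T_k(i_0) \geq \theta_{i_0}$, and so $i_0$, a non-seed member of $G_B$, activates at $t=1$. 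This holds for every such design, so none is $k$-robust for $G_B$.

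\emph{Existence for $G_A$.} I adapt the cyclic construction from Theorem~\ref{thm:functionality}, but I need \emph{every} non-seed of $G_A$ to stay inactive for \emph{all} time, not just at $t=1$. This is the main obstacle: members of $G_B$ may activate at $t=1$ and then push $G_A$ agents over at later steps. The fix is to wire each group only within itself. Using $|G_A| \geq r+1$, I arrange $G_A$ in a cycle and give each $i \in G_A$ weight $\Lambda/r$ from each of the next $r$ members of $G_A$; I do the same separately inside $G_B$ using $|G_B| \geq r+1$. All other entries are zero, and in particular $w_{ij}=0$ for $i \in G_A$, $j \in G_B$.

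Every row then has exactly $r$ entries equal to $\Lambda/r$, so $\mathcal{G}(r,\Lambda)$ holds for all agents. Now fix any $|S_0| \leq k$. The pressure on an inactive $i \in G_A$ at any time $t$ comes only from active $G_A$ agents. I show by induction that no non-seed of $G_A$ ever activates. If the active $G_A$ agents at time $t$ are exactly $S_0 \cap G_A$, then the pressure on $i$ is at most $\min(k,r)\Lambda/r = \Gamma_{k,r}(\Lambda) < \underline{\theta}_{G_A} \leq \theta_i$, so no $G_A$ non-seed activates at $t+1$. Whatever happens in $G_B$ is irrelevant because those columns vanish on $G_A$ rows. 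This gives $k$-robustness for $G_A$, completing the corollary.
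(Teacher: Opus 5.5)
Your proof is correct and follows the paper's route. For $G_B$, you combine the exposure floor with the top-$k$ seed set, taking care that $i_0\notin S_0$. For $G_A$, you use the within-group cyclic design with every $G_B\to G_A$ weight set to zero, and your explicit induction is just a more careful version of the paper's ``cascades in $G_B$ cannot propagate into $G_A$'' remark.

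One small fix: delete the preliminary claim that $T_k(i)<\theta_i$ for all $i\in G$ suffices for $k$-robustness of $G$. Without insulation this is false, because outside agents activated at $t=1$ can later push a member of $G$ over, which is exactly the obstacle you go on to handle.
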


\begin{proof}
For group $G_B$: the exposure floor gives $T_k(i) \geq \Gamma_{k,r}(\Lambda) \geq \underline{\theta}_{G_B}$ for every agent.  In particular, the agent $i^* \in G_B$ with $\theta_{i^*} = \underline{\theta}_{G_B}$ has $T_k(i^*) \geq \theta_{i^*}$, so $k$-robustness fails by Theorem~\ref{thm:topk}.

For group $G_A$: since $\Gamma_{k,r}(\Lambda) < \underline{\theta}_{G_A}$, we construct a design in which $G_A$ agents receive influence only from other $G_A$ agents.  Arrange the agents of $G_A$ in a cycle and assign each the next $r$ agents as its coordination set, each with weight $\Lambda/r$; this requires $|G_A| \geq r + 1$.  Give every $G_A$ agent zero influence weight from all $G_B$ agents.  For $G_B$ agents, assign coordination sets within $G_B$ in the same way, which requires $|G_B| \geq r + 1$.

In this design, the coordination requirement $\mathcal{G}(r,\Lambda)$ holds for all agents.  For any $i \in G_A$, the incoming influence comes entirely from $G_A$, so $T_k(i) = \Gamma_{k,r}(\Lambda) < \theta_i$.  Because no $G_B$ agent exerts influence on any $G_A$ agent, cascades originating in $G_B$ cannot propagate into $G_A$.  By Theorem~\ref{thm:topk} applied to $G_A$ in isolation, $k$-robustness holds for group $G_A$.
\end{proof}

The corollary identifies a precise \emph{satisfiability gap}.  The neutral coordination standard is mathematically compatible with safety for the insulated group but mathematically incompatible with safety for the exposed group.  The institution's options are to weaken the coordination requirement $\Lambda$, reduce baseline pressure on the exposed group (raising $\underline{\theta}_{G_B}$), or abandon uniform standards and tailor the design to each group.

\subsection{Mean-field two-group analysis}
\label{sec:mean-field-fairness}

The finite-agent impossibility result in the previous subsection shows that a uniform coordination standard can force one group into harmful behavior while leaving the other resistant.  A complementary question remains: how do group-level differences and social structure interact to destabilize the institution?

To answer this, we extend the mean-field framework of Section~\ref{sec:mean-field} to two groups.  As before, we replace individual agents with continuous distributions, but now each group has its own distribution of net thresholds, and we track how the two groups influence each other.  The results in this subsection are mathematically exact within the mean-field model but describe averaged large-population behavior rather than the fate of any particular agent.

The subsection proceeds in four steps.  First, we specify each group's threshold distribution under a shared baseline.  Second, we derive the linearized stability matrix (the Jacobian) that governs whether the harm-free state can survive small perturbations.  Third, in the simplest interaction structure, we show that no neutral baseline can equalize the two groups' robustness margins.  Fourth, we introduce homophily and prove that increasing within-group influence eventually destabilizes the institution, with an explicit formula for the tipping point.

We model two groups, $A$ and $B$, within the institution, representing any partition into a more insulated and a more exposed group.  The parameter $s \geq 0$ represents a common baseline pressure level applied to both groups under a neutral institutional design.  It is closer in spirit to the baseline pressure $\sigma_i$ of Section~\ref{sec:structure} than to the social coupling constant $\alpha$ from the single-group mean-field of Section~\ref{sec:mean-field}: $s$ is institutional, not interactional.  We carry $s$ as an explicit argument of the threshold distributions because the analyses below compare robustness across different baseline values.

Each group has its own distribution of net thresholds.  We use a parameterized CDF $F_g(x;\, s)$ in which $x$ is the function's variable (threshold value) and $s$ is a parameter (baseline pressure); the semicolon notation marks this distinction.  For each fixed $s$, $F_g(\cdot;\, s)$ is a CDF in $x$; varying $s$ gives a family of CDFs.

\begin{definition}[Two-group threshold distributions]
\label{def:two-group-cdf}
For each group $g \in \{A, B\}$ and each baseline pressure $s \geq 0$, $F_g(\cdot;\, s)$ is a cumulative distribution function on $[0, \infty)$ satisfying
\begin{enumerate}[label=(\roman*)]
  \item $F_g(0;\, s) = 0$ (no agent has a net threshold of zero or below);
  \item $F_g$ has a continuous density $f_g(x;\, s) = \partial F_g(x;\, s) / \partial x$ on $(0, \infty)$;
  \item the density has a well-defined right-hand limit at zero, $f_g(0^+;\, s) = \lim_{x \downarrow 0} f_g(x;\, s)$.
\end{enumerate}
We call $f_g(0^+;\, s)$ the \emph{boundary density} for group $g$ at baseline $s$.  $F_g(x;\, s)$ is interpreted as the proportion of group-$g$ agents whose net threshold is at most $x$ when the shared baseline pressure is $s$.
\end{definition}

The boundary density $f_g(0^+;\,s)$ will turn out to be the key quantity.  It measures how many agents in group $g$ are \emph{just barely} resistant to activation.  A large boundary density means many agents are sitting right at the edge, so even a small push can tip them over.

To make the analysis fully explicit, we choose a specific family of distributions.
Let
\begin{equation}
\label{eq:parametric}
F_g(x;\, s) = 1 - \exp\!\big(-\lambda_g\, e^{\eta s}\, x\big), \qquad x \geq 0.
\end{equation}
This is the CDF of an exponential distribution with rate parameter $\lambda_g e^{\eta s}$, where $\lambda_g > 0$ and $\eta > 0$.  The rates $\lambda_g$ are unrelated to the instability parameter $\lambda_0$ of Section~\ref{sec:mean-field}; the letter $\lambda$ is standard in both roles.  The baseline $s$ enters by compressing the threshold distribution rather than acting as a strict additive shift.  This captures the empirical observation that institutional pressure does not simply subtract from a fixed resistance level but reshapes the entire distribution of susceptibility, pushing more agents toward the activation boundary.

An exponential distribution is a standard one-parameter continuous distribution on $[0,\infty)$.  Under this distribution, most agents have relatively small thresholds, with progressively fewer agents at large ones. The parameter $\lambda_g$ controls the intrinsic vulnerability of group $g$; a larger $\lambda_g$ means more of group $g$'s agents have small net thresholds and are therefore easier to push into harmful behavior.  The parameter $\eta$ controls how sensitive both groups are to the shared baseline pressure $s$; when $\eta$ is large, even a modest increase in $s$ concentrates many more agents near the activation boundary.

The density of this distribution is $f_g(x;\,s) = \lambda_g\, e^{\eta s}\, \exp(-\lambda_g\, e^{\eta s}\, x)$, which is continuous on $(0,\infty)$.  At $x = 0$, $F_g(0;\,s) = 1 - \exp(0) = 0$, confirming the first regularity condition.  The boundary density is
\begin{equation}
\label{eq:boundary-density}
f_g(0^+;\, s) \;=\; \lambda_g\, e^{\eta s}.
\end{equation}
If $\lambda_B > \lambda_A$---that is, if group $B$ is intrinsically more vulnerable---then $f_B(0^+;\,s) > f_A(0^+;\,s)$ for every baseline $s$.  In words, group $B$ always has more near-threshold agents than group $A$, no matter how much or how little baseline pressure the institution applies.

To illustrate, suppose $\lambda_A = 1$, $\lambda_B = 3$, and $\eta = 0.5$.  At baseline $s = 0$, the boundary densities are $f_A(0^+;\,0) = 1$ and $f_B(0^+;\,0) = 3$, so group $B$ has three times as many near-threshold agents as group $A$.  If the baseline increases to $s = 1$, these become $f_A(0^+;\,1) = e^{0.5} \approx 1.65$ and $f_B(0^+;\,1) = 3e^{0.5} \approx 4.95$.  Both groups become more vulnerable, but the three-to-one ratio persists---the gap between the groups is structural, not a consequence of any particular pressure level.

So far we have described how each group's thresholds are distributed, but we have not said anything about how the two groups influence each other.

\begin{definition}[Two-group interaction structure]
\label{def:two-group-interaction}
The \emph{interaction matrix} is a $2 \times 2$ matrix $M = (m_{gg'})$ with $m_{gg'} \geq 0$, where $m_{gg'}$ is the strength of social influence that group $g'$ exerts on group $g$.  For each group $g \in \{A, B\}$, the \emph{harm rate} $h_g \in [0, 1]$ is the proportion of group-$g$ agents engaging in harmful behavior.  In the vanishing-seed limit, the harm rates satisfy the fixed-point equation
\begin{equation}
\label{eq:two-group-fp}
h_g \;=\; F_g\!\Big(\sum_{g' \in \{A,B\}} m_{gg'}\, h_{g'};\; s\Big), \qquad g \in \{A, B\}.
\end{equation}
The \emph{harm-free state} is the fixed point $h_A = h_B = 0$.
\end{definition}

For example, $m_{AB}$ is the influence that group $B$'s behavior has on group $A$'s members.  The right-hand side of~\eqref{eq:two-group-fp} is the proportion of group-$g$ agents whose threshold is below the total social pressure $\sum_{g'} m_{gg'} h_{g'}$; at a fixed point, this proportion must equal the harm rate $h_g$ itself.  The harm-free state is always a fixed point, since $F_g(0;\,s) = 0$ and no one activates when there is no social pressure.  The central question is whether this fixed point is \emph{stable}.  If a tiny fraction of agents in each group begins engaging in harmful behavior, does the perturbation die out or grow?

The stability question becomes a problem about a two-dimensional discrete map.  At each time step, the pair $(h_A, h_B)$ updates according to the fixed-point equations above, and we need to determine whether small perturbations of $(0,0)$ grow or shrink.  Standard dynamical-systems theory linearizes around the harm-free state: when harm rates $h_A$ and $h_B$ are small, the nonlinear fixed-point equations are well approximated by a linear system whose behavior is governed by a single matrix, the \emph{Jacobian}.

Concretely, each fixed-point equation has the form $h_g = F_g(\text{something depending on } h_A, h_B)$.  To linearize, we differentiate with respect to $h_{g'}$ and evaluate at $h_A = h_B = 0$.  By the chain rule, differentiating $F_g\big(\sum_{\ell \in \{A,B\}} m_{g\ell} h_{\ell};\,s\big)$ with respect to $h_{g'}$ pulls down the density $f_g$ times the coefficient $m_{gg'}$.  Evaluating at the origin gives
\begin{equation}
\frac{\partial}{\partial h_{g'}}\, F_g\!\Big(\sum_{\ell \in \{A,B\}} m_{g\ell}\, h_{\ell};\; s\Big)\bigg|_{h_A = h_B = 0} \;=\; f_g(0^+;\, s) \;\cdot\; m_{gg'}.
\end{equation}
Collecting these partial derivatives into a matrix, we obtain the Jacobian
\begin{equation}
\label{eq:jacobian}
J(s) \;=\; \mathrm{diag}\!\big(f_A(0^+;\, s),\; f_B(0^+;\, s)\big) \;\cdot\; M.
\end{equation}
The Jacobian $J(s)$ encodes both the near-threshold density of each group (through the diagonal factor) and the inter-group influence structure (through $M$).

Standard linearized-stability theory for discrete maps gives the stability criterion in terms of the eigenvalues of $J(s)$: the harm-free fixed point is stable when every eigenvalue has magnitude below 1 and unstable when any eigenvalue exceeds 1.  Since $f_g(0^+;\,s) \geq 0$ and $m_{gg'} \geq 0$, the Jacobian has nonnegative entries, and the Perron--Frobenius theorem identifies a single dominant eigenvalue, real and nonnegative and of maximal magnitude, that controls stability on its own.  This dominant eigenvalue plays the role of a \emph{cascade threshold}, the two-group analogue of $\lambda_0 = \alpha f(0^+)$ from Section~\ref{sec:mean-field}.

As a first application of the Jacobian framework, consider the simplest possible interaction structure, in which both groups experience the same aggregate social pressure $\alpha H$, where $\alpha > 0$ is a coupling strength and $H$ is the overall population-wide harm rate.  This is the \emph{scalar-contagion} case, where there is only one channel of social influence, and it affects everyone equally.  In this setting, the cascade threshold for each group is $\alpha\, f_g(0^+;\,s)$, and we can define a \emph{robustness margin} that measures how far each group is from instability.

\begin{proposition}[Impossibility of neutral equal robustness]
\label{thm:impossibility}
In the scalar-contagion model with the parametric family~\eqref{eq:parametric}, if $\lambda_B > \lambda_A$, then for every group-neutral baseline $s$ the robustness margins
\begin{equation}
\label{eq:robustness-margin}
R_g(s) \;=\; 1 - \alpha\, f_g(0^+;\, s) \;=\; 1 - \alpha\, \lambda_g\, e^{\eta s}
\end{equation}
satisfy $R_B(s) < R_A(s)$.
\end{proposition}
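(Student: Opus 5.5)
The plan is a direct computation from the boundary density formula~\eqref{eq:boundary-density}; no fixed-point or eigenvalue machinery is needed once the robustness margins are written in closed form. First I will make explicit why, in the scalar-contagion case, the relevant cascade threshold for group $g$ is $\alpha f_g(0^+;\,s)$. Each group receives the common pressure $\alpha H$, so linearizing $F_g(\alpha H;\,s)$ at $H=0$ gives slope $\alpha f_g(0^+;\,s)$. This is the same chain-rule step used in Proposition~\ref{prop:scalar}(ii) and in the derivation of the Jacobian~\eqref{eq:jacobian}. That step justifies the form $R_g(s) = 1 - \alpha f_g(0^+;\,s)$ in~\eqref{eq:robustness-margin}, and substituting~\eqref{eq:boundary-density} gives $R_g(s) = 1 - \alpha\lambda_g e^{\eta s}$.

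Next I will fix an arbitrary group-neutral baseline $s \ge 0$, the same value for both groups, and form the difference
\[
R_A(s) - R_B(s) \;=\; \alpha\, e^{\eta s}\,(\lambda_B - \lambda_A).
\]
Each factor has a known sign: $\alpha > 0$ by Assumption~\ref{ass:aggregate-pressure}, $e^{\eta s} > 0$ for every real $s$ (and every $\eta$), and $\lambda_B - \lambda_A > 0$ by hypothesis. So the difference is strictly positive, which gives $R_B(s) < R_A(s)$. Since $s$ was arbitrary, the inequality holds for every neutral baseline.

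I expect no real obstacle in this argument. The only point needing care is interpretive rather than computational. Neutrality means that $s$ multiplies both boundary densities by the same factor $e^{\eta s}$, so the ratio $f_B/f_A = \lambda_B/\lambda_A$ is independent of $s$. This is why no choice of $s$ can close the gap, and I will note it as a remark after the proof. I will also observe that the strict inequality holds even when one or both margins are negative, that is, when both groups are already unstable, because the argument compares the margins directly and never uses their signs.
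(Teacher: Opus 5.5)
Your proposal is correct and matches the paper's proof: substitute $f_g(0^+;\,s)=\lambda_g e^{\eta s}$ and compare $\alpha\lambda_B e^{\eta s}$ with $\alpha\lambda_A e^{\eta s}$ using $\alpha>0$, $e^{\eta s}>0$, and $\lambda_B>\lambda_A$. Your preliminary linearization step is a useful motivation for the form of $R_g$, but it is not needed for the proof, since the statement defines the margins directly.
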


A robustness margin above zero means the group would be stable in isolation (perturbations die out); a margin below zero means the group would be unstable in isolation (cascading if decoupled from the other group).  The proposition says that no uniform baseline $s$ can equalize the two groups' robustness margins.  Group $B$ is always closer to the brink.  The conclusion does not depend on the exponential family specifically; it holds for any threshold distributions satisfying $f_B(0^+;\,s) > f_A(0^+;\,s)$ for all admissible $s$, of which the exponential family is one concrete example.

\begin{proof}
Since $f_g(0^+;\,s) = \lambda_g e^{\eta s}$, we have $R_g(s) = 1 - \alpha \lambda_g e^{\eta s}$.  Because $\lambda_B > \lambda_A$, it follows that $\alpha \lambda_B e^{\eta s} > \alpha \lambda_A e^{\eta s}$, and therefore $R_B(s) < R_A(s)$ for every $s$.
\end{proof}

In practice, agents in an institution are often more influenced by members of their own group than by members of the other group.  A junior employee is more likely to take behavioral cues from other junior employees than from senior administrators, and vice versa.  This tendency is called \emph{homophily}---literally, ``love of the same''---and is one of the best-documented patterns in social-network research \cite{mcpherson2001}.

\begin{definition}[Homophily parametrization]
\label{def:homophily}
For each group $g \in \{A, B\}$, let $\beta_g > 0$ denote the \emph{total influence capacity} of group $g$, the overall strength of social pressure that group $g$ can exert on others.  The \emph{homophily parameter} $\pi \in [1/2, 1]$ is the proportion of social influence directed within one's own group.  The interaction matrix under homophily $\pi$ is
\begin{equation}
\label{eq:homophily-matrix}
M(\pi) \;=\; \begin{pmatrix} \pi\,\beta_A & (1-\pi)\,\beta_B \\ (1-\pi)\,\beta_A & \pi\,\beta_B \end{pmatrix}.
\end{equation}
\end{definition}

When $\pi = 1/2$, influence is equally split between groups (no homophily); as $\pi \to 1$, influence becomes entirely within-group (complete homophily).  Reading across the first row of $M(\pi)$, group $A$ receives within-group influence of strength $\pi\beta_A$ (a proportion $\pi$ of group $A$'s total influence capacity directed inward) and between-group influence of strength $(1-\pi)\beta_B$ (a proportion $1-\pi$ of group $B$'s capacity directed outward).  The second row says the same for group $B$, with the roles reversed.

Combining the Jacobian formula~\eqref{eq:jacobian} with the interaction matrix $M(\pi)$, the stability of the harm-free state depends on the eigenvalues of
\begin{equation}
J(s,\pi) \;=\; \mathrm{diag}\!\big(f_A(0^+;\, s),\; f_B(0^+;\, s)\big) \;\cdot\; M(\pi).
\end{equation}
Before computing these eigenvalues, it helps to introduce a shorthand that has a natural interpretation.  Define group $g$'s \emph{within-group instability parameter}:
\begin{equation}
x_g(s) \;=\; f_g(0^+;\, s) \;\cdot\; \beta_g \;=\; \lambda_g\, \beta_g\, e^{\eta s}.
\end{equation}
This product combines two things, namely how many of group $g$'s agents are near the activation threshold ($f_g(0^+;\,s)$) and how much total influence group $g$ exerts ($\beta_g$).  The interpretation is direct.  $x_g(s)$ measures how close group $g$ is to cascading on its own.  When $x_g < 1$, group $g$ would be stable in isolation; when $x_g > 1$, group $g$ would cascade even without any influence from the other group.

Using this shorthand, the Jacobian becomes
\begin{equation}
J(s,\pi) \;=\; \begin{pmatrix} \pi\, x_A & (1-\pi)\, f_A(0^+;\,s)\, \beta_B \\ (1-\pi)\, f_B(0^+;\,s)\, \beta_A & \pi\, x_B \end{pmatrix}.
\end{equation}
The diagonal entries $\pi\, x_g$ come from within-group influence: $f_g(0^+;\,s) \cdot \pi\beta_g = \pi\, x_g$.  The off-diagonal entries $f_g(0^+;\,s) \cdot (1-\pi)\beta_{g'}$ come from between-group influence.

The eigenvalues of $J(s, \pi)$ follow from direct computation.  The trace is $\pi(x_A + x_B)$, and the product of the off-diagonal entries is
\begin{equation}
J_{12}\, J_{21} \;=\; (1-\pi)^2\, f_A(0^+;\,s)\, \beta_B \;\cdot\; f_B(0^+;\,s)\, \beta_A \;=\; (1-\pi)^2\, x_A\, x_B,
\end{equation}
where the last step regroups the factors using the definition $x_g = f_g(0^+;\,s)\,\beta_g$.  Substituting into the standard $2 \times 2$ eigenvalue formula gives
\begin{equation}
\label{eq:eigenvalue}
\mu(s, \pi) \;=\; \frac{\pi(x_A + x_B) \;+\; \sqrt{\pi^2(x_A - x_B)^2 + 4(1-\pi)^2\, x_A\, x_B}}{2}.
\end{equation}
The harm-free state is stable when $\mu < 1$ and unstable when $\mu > 1$.

To see how homophily affects stability, suppose $x_A = 0.6$ and $x_B = 1.3$.  Group $B$ would cascade on its own ($x_B > 1$) but group $A$ would not ($x_A < 1$).  The simple average is $\frac{1}{2}(x_A + x_B) = 0.95$, which is below 1---so if the two groups were perfectly mixed, the institution would be stable overall.

At low homophily ($\pi = 1/2$), the eigenvalue formula gives $\mu = 0.95$, confirming that the institution is indeed stable, because cross-group mixing dilutes group $B$'s instability.  But at high homophily ($\pi = 0.8$), the eigenvalue rises to $\mu \approx 1.09$ because group $B$ is now effectively isolated from the stabilizing influence of group $A$, and its instability dominates.  There is a critical homophily level $\bar{\pi} \approx 0.65$; above this threshold, the harm-free state is unstable, and a cascade of harmful behavior is possible.

The next result formalizes this pattern.  It shows that whenever one group is individually unstable but the institution is stable on average, increasing homophily eventually destabilizes the system, and it gives an explicit formula for the tipping point.  Throughout, the structural parameters and the baseline pressure $s$ are fixed; the variable of interest is $\pi$.  For each fixed $s$, the products $x_g(s) = f_g(0^+; s) \cdot \beta_g$ are constants, and $\mu(s, \pi)$ reduces to a single-variable function of $\pi$ on $[1/2, 1]$.

\begin{theorem}[Homophily destabilizes neutral safety]
\label{thm:homophily}
In the parametric two-group model, suppose
\begin{equation}
\label{eq:homophily-hyp}
x_B(s) > 1 > x_A(s), \qquad x_A(s) + x_B(s) < 2.
\end{equation}
The first condition says that group $B$ would be individually unstable ($x_B > 1$) while group $A$ would be individually stable ($x_A < 1$); the second condition says that on average the institution is still in the stable regime.  Together, these describe a setting in which safety depends on inter-group mixing dampening the instability of the more exposed group.  Then:
\begin{enumerate}[label=(\roman*)]
    \item At low homophily ($\pi = 1/2$), the harm-free state is stable: $\mu(s, 1/2) < 1$.
    \item At full homophily ($\pi = 1$), it is unstable: $\mu(s, 1) > 1$.
    \item There exists at least one crossing value $\bar{\pi}(s) \in (1/2, 1)$ at which $\mu(s, \bar{\pi}) = 1$.  Solving $\mu = 1$ yields the explicit formula
\begin{equation}
\label{eq:critical-pi}
\bar{\pi}(s) = \frac{1 - x_A(s)\, x_B(s)}{x_A(s) + x_B(s) - 2\, x_A(s)\, x_B(s)},
\end{equation}
which satisfies $1/2 < \bar{\pi}(s) < 1$ under the stated hypotheses.
    \item The eigenvalue $\mu(s, \pi)$ is strictly increasing in $\pi$ on $[1/2,\, 1]$.  Consequently, the crossing $\bar{\pi}(s)$ is unique, and $\mu(s,\pi) > 1$ for all $\pi > \bar{\pi}(s)$.  Once homophily exceeds the critical level, the harm-free state is unstable.
\end{enumerate}
\end{theorem}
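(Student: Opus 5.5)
The plan is to work with the characteristic polynomial of $J(s,\pi)$ rather than the explicit root formula~\eqref{eq:eigenvalue}, since all four parts reduce to sign questions about it. Write $S = x_A + x_B$, $P = x_A x_B$ and $d = x_B - x_A$, so that $S^2 = d^2 + 4P$. The trace is $\pi S$. The determinant is $\pi^2 P - (1-\pi)^2 P = (2\pi - 1)P$, so the characteristic polynomial is
\[
p(\mu,\pi) = \mu^2 - \pi S\,\mu + (2\pi - 1)P .
\]
The eigenvalue $\mu(s,\pi)$ is its larger root, which is the Perron root. Its discriminant is $D(\pi) = \pi^2 d^2 + 4(1-\pi)^2 P$, so $\partial_\mu p = 2\mu - \pi S = \sqrt{D} > 0$ at $\mu(s,\pi)$ whenever $D > 0$. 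Under the hypotheses $d \neq 0$, so $D > 0$ on all of $[1/2,1]$ and $\mu$ is smooth in $\pi$ there.

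For (i), substitute $\pi = 1/2$. Then $D = \tfrac14(d^2 + 4P) = S^2/4$, hence $\mu(s,1/2) = S/4 + S/4 = S/2 < 1$ by the averaging hypothesis $S < 2$. For (ii), at $\pi = 1$ the Jacobian is diagonal, so $\mu(s,1) = \max(x_A, x_B) = x_B > 1$.

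For (iii), continuity of $\mu$ in $\pi$ and the intermediate value theorem give some $\bar\pi \in (1/2,1)$ with $\mu = 1$. At any such point $p(1,\bar\pi) = 0$. Now $q(\pi) := p(1,\pi) = 1 - \pi S + (2\pi - 1)P$ is affine in $\pi$, with $q(1/2) = 1 - S/2 > 0$ and $q(1) = (1 - x_A)(1 - x_B) < 0$. So $q$ has nonzero slope $2P - S$ and a unique root. Solving $q = 0$ gives exactly~\eqref{eq:critical-pi}, and the sign change of $q$ on $(1/2,1)$ places that root strictly inside the interval. This avoids checking the sign of the denominator $S - 2P$ separately: the sign change of $q$ forces $S - 2P > 0$.

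The main obstacle is (iv), strict monotonicity. I would use implicit differentiation of $p(\mu(\pi),\pi) = 0$:
\[
\frac{d\mu}{d\pi} = -\frac{\partial_\pi p}{\partial_\mu p} = \frac{S\mu - 2P}{\sqrt{D}} .
\]
It therefore suffices to show $S\mu > 2P$. Since $\mu = (\pi S + \sqrt D)/2$, this is equivalent to $S\sqrt{D} > 4P - \pi S^2 = 4(1-\pi)P - \pi d^2$. If the right side is nonpositive we are done, because $S\sqrt{D} > 0$. Otherwise, square both sides and apply Cauchy--Schwarz:
\[
S^2 D = (d^2 + 4P)\bigl(\pi^2 d^2 + 4(1-\pi)^2 P\bigr) \ge \bigl(\pi d^2 + 4(1-\pi)P\bigr)^2 > \bigl(4(1-\pi)P - \pi d^2\bigr)^2 .
\]
The final inequality is strict because $\pi d^2 > 0$ and $(1-\pi)P \ge 0$, with $P > 0$. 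Hence $\mu$ is strictly increasing on $[1/2,1]$. Uniqueness of the crossing, and $\mu > 1$ for $\pi > \bar\pi$, then follow immediately, consistent with the uniqueness of the root of $q$ found in (iii).
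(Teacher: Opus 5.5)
Your proof is correct, and it differs from the paper's mainly in part (iii). The paper works throughout with the explicit root formula. It isolates the square root, checks signs and squares to get a linear equation in $\pi$, and proves the denominator $x_A+x_B-2x_Ax_B$ is positive by AM--GM. It then places $\bar\pi$ in $(1/2,1)$ by showing $\bar\pi$ is strictly decreasing in both $x_A$ and $x_B$ and evaluating it on the edges of the triangular parameter region. Your observation that $q(\pi)=p(1,\pi)$ is affine in $\pi$, with $q(1/2)=1-S/2>0$ and $q(1)=(1-x_A)(1-x_B)<0$, gives the formula, the sign of the denominator, and the location in $(1/2,1)$ all at once, with no squaring. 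It also yields uniqueness of the crossing without appealing to (iv). What the paper's longer route buys is a comparative-statics fact: the tipping point $\bar\pi$ falls as either group becomes more vulnerable.

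In (iv) the two proofs bound the same quantity: your $2\sqrt{D}\,\mu' = S\sqrt{D}+\pi d^2-4(1-\pi)P$ is exactly the paper's $2\Delta\mu'$ with $\Delta=\sqrt{D}$. The paper uses the single estimate $\Delta\ge 2(1-\pi)\sqrt{P}$, which bounds this quantity below by $\pi d^2 + 2(1-\pi)\sqrt{P}\,(\sqrt{x_A}-\sqrt{x_B})^2>0$ and avoids a case split. Your case split with Cauchy--Schwarz is equally valid.

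One small tightening: in the squared branch, the strict inequality $(\pi d^2+4(1-\pi)P)^2>(4(1-\pi)P-\pi d^2)^2$ requires $(1-\pi)P>0$, not merely $\ge 0$; at $\pi=1$ the two sides coincide. It does hold there automatically, because that branch assumes $4(1-\pi)P>\pi d^2>0$.
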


Part~(iv) gives the theorem its sharpest interpretation.  Homophily has a monotone destabilizing effect, and the critical level $\bar{\pi}(s)$ is a clean transition point rather than merely one of possibly several crossings.  The monotonicity also has a practical consequence.  There is no ``safe'' level of moderate homophily that improves institutional stability.  A policymaker cannot hope that within-group affinity helps groups police their own behavior; in this model, every increment of homophily makes the institution less safe, never more.

\begin{proof}
\emph{Part (i): stability at low homophily.}  At $\pi = 1/2$, the term under the square root in~\eqref{eq:eigenvalue} simplifies: $\tfrac{1}{4}(x_A - x_B)^2 + x_A x_B = \tfrac{1}{4}(x_A + x_B)^2$.  So the square root equals $\tfrac{1}{2}(x_A + x_B)$, and the numerator becomes $\tfrac{1}{2}(x_A + x_B) + \tfrac{1}{2}(x_A + x_B) = x_A + x_B$, giving $\mu = (x_A + x_B)/2 < 1$ by hypothesis.

\emph{Part (ii): instability at full homophily.}  At $\pi = 1$, the Jacobian $J$ becomes diagonal with entries $x_A(s)$ and $x_B(s)$; since $x_B(s) > 1$, the largest eigenvalue exceeds 1.

\emph{Part (iii): existence of a crossing.}  Since $\mu(s, \cdot)$ is continuous on $[1/2, 1]$ with $\mu(s, 1/2) < 1$ and $\mu(s, 1) > 1$, the intermediate value theorem gives at least one $\bar{\pi} \in (1/2, 1)$ with $\mu(s, \bar{\pi}) = 1$.

\emph{Derivation of the formula.}  The key simplification is that setting $\mu = 1$ in the eigenvalue formula and squaring both sides
yields, after expansion and cancellation, a linear equation in $\pi$ that can be solved in closed form.  Set $\mu(s, \pi) = 1$ in~\eqref{eq:eigenvalue}.  The eigenvalue equation becomes
\begin{equation}
\pi(x_A + x_B) - 2 \;=\; -\sqrt{\pi^2(x_A - x_B)^2 + 4(1-\pi)^2 x_A x_B}.
\end{equation}
Since $x_A + x_B < 2$ and $\pi \leq 1$, the left-hand side is negative, matching the sign of the right-hand side, so squaring does not introduce a spurious solution from the opposite-sign branch.
Squaring both sides:
\begin{align}
\text{LHS}^2 &= \pi^2(x_A + x_B)^2 - 4\pi(x_A + x_B) + 4, \\
\text{RHS}^2 &= \pi^2(x_A - x_B)^2 + 4(1-\pi)^2 x_A x_B.
\end{align}
Expanding $(x_A + x_B)^2 - (x_A - x_B)^2 = 4x_A x_B$ and $(1-\pi)^2 = 1 - 2\pi + \pi^2$, then canceling:
\begin{equation}
4\pi^2 x_A x_B - 4\pi(x_A + x_B) + 4 \;=\; 4(1 - 2\pi + \pi^2) x_A x_B.
\end{equation}
Simplifying gives $\pi(x_A + x_B) = 1 + 2\pi x_A x_B - x_A x_B$, i.e.,
\begin{equation}
\pi\big[x_A + x_B - 2 x_A x_B\big] \;=\; 1 - x_A x_B.
\end{equation}
Write $a = x_A(s)$, $b = x_B(s)$, and $D = a + b - 2ab$.  The hypotheses give $a \in (0, 1)$, $b > 1$, and $a + b < 2$, defining a triangular region in the $(a, b)$-plane.  By the arithmetic mean--geometric mean inequality, $ab \leq (a+b)^2/4 < (a+b)/2$, so $D > 0$ throughout this region.  Dividing by $D$ yields~\eqref{eq:critical-pi}.
\emph{The formula lies between $1/2$ and $1$.}  Differentiating $\bar{\pi} = (1-ab)/D$ gives
\begin{equation}
\frac{\partial \bar{\pi}}{\partial a} \;=\; -\frac{(b-1)^2}{D^2}, \qquad \frac{\partial \bar{\pi}}{\partial b} \;=\; -\frac{(a-1)^2}{D^2},
\end{equation}
both strictly negative on the interior.  So $\bar{\pi}$ is strictly decreasing in both $a$ and $b$, and the extreme values occur on the boundary.  On the edge $a = 0$ with $b \in (1, 2)$, $\bar{\pi}(0, b) = 1/b \in (1/2, 1)$.  On the edge $b = 1$ with $a \in (0, 1)$, the $(1-a)$ factor cancels and $\bar{\pi}(a, 1) = 1$. On the hypotenuse $a + b = 2$ with $a \in (0, 1)$, substituting $b = 2-a$ gives $\bar{\pi} = (1-a)^2/[2(1-a)^2] = 1/2$.
Since the open hypothesis region lies strictly inside these boundary edges, $1/2 < \bar{\pi}(s) < 1$.

\emph{Part (iv): strict monotonicity.}  Write $a = x_A(s)$ and $b = x_B(s)$, and define
\begin{equation}
\Delta(\pi) \;=\; \sqrt{\pi^2(a-b)^2 + 4(1-\pi)^2\, ab}.
\end{equation}
Differentiating $\mu = \bigl(\pi(a+b) + \Delta\bigr)/2$ gives
\begin{equation}
2\Delta \cdot \mu'(\pi) \;=\; (a+b)\,\Delta \;+\; \pi(a-b)^2 \;-\; 4(1-\pi)\,ab.
\end{equation}
Since $\pi^2(a-b)^2 \geq 0$, we have $\Delta \geq 2(1-\pi)\sqrt{ab}$.  Substituting this bound, the right-hand side is bounded below by
\begin{equation}
\pi(a-b)^2 \;+\; 2(1-\pi)\sqrt{ab}\,\bigl[(a+b) - 2\sqrt{ab}\,\bigr] \;=\; \pi(a-b)^2 \;+\; 2(1-\pi)\sqrt{ab}\,(\sqrt{a}-\sqrt{b})^2.
\end{equation}
Both terms are nonnegative, and the first is strictly positive because $a \neq b$ (since $b > 1 > a$).  Therefore $\mu'(\pi) > 0$ on $[1/2,\, 1]$.  Combined with parts~(i)--(iii), the unique crossing at $\bar{\pi}$ separates a stable regime ($\pi < \bar{\pi}$) from an unstable one ($\pi > \bar{\pi}$).
\end{proof}

\section{Institutional Repair}
\label{sec:repair}

The preceding sections diagnosed when harm spreads and how signed influence and group structure interact with cascade risk.  Suppose the diagnosis is bad---some agents are so susceptible to peer pressure, given the institution's current structure, that the robustness condition fails.  What can the institution do about it?

An institution has two levers.  The first is to change the \emph{environment}: reduce the baseline pressures that push agents toward harm in the first place.  In practice, this might mean revising incentive structures, reducing workload, or removing norms that normalize harmful conduct.  Formally, we model this as \emph{baseline reduction}: decrease the baseline pressure $\sigma_i$ on agent $i$ by an amount $u_i \geq 0$.  The modified baseline is $\sigma_i - u_i$, and the modified net threshold is
\begin{equation}
\tau_i - (\sigma_i - u_i) = (\tau_i - \sigma_i) + u_i = \theta_i + u_i.
\end{equation}
So reducing baseline by $u_i$ increases the net threshold by $u_i$, making the agent harder to push into harm.

The second lever is to restructure \emph{relationships}: reduce the influence that one agent's harmful behavior exerts on another.  In practice, this might mean reassigning reporting lines, limiting exposure between certain roles, or introducing structural buffers.  Formally, we model this as \emph{influence reduction}: decrease the influence weight $w_{ij}$ by an amount $z_{ij} \geq 0$, giving modified influence $w_{ij} - z_{ij}$.  This reduces the social pressure that agent $j$'s harmful behavior places on agent $i$.

Institutional change is costly.  Sunk costs in existing structures, disruption of political coalitions, and threats to institutional legitimacy have been identified as sources of \emph{structural inertia} that resist reform \cite{hannan1984}.  On the algorithmic side, influence maximization in cascade models is NP-hard but admits greedy approximation guarantees under standard diffusion models, framing the problem as a combinatorial selection with a cardinality budget \cite{kempe2003}.  Our repair problem differs in that the decision variables are continuous, specifying how much to reduce each baseline or influence weight.  We model intervention costs with per-unit or, more generally, convex cost functions, a standard choice in mathematical programming.  The goal is to minimize the total cost of repair subject to the constraint that the repaired institution is $k$-robust.

The $k$-robustness condition from Theorem~\ref{thm:topk} is a \emph{strict} inequality: $T_k(i) < \theta_i$.  Strict inequalities create a real problem for optimization.  To see why, consider the simplest possible example: a single agent sits exactly at the boundary, $T_k(i) = \theta_i$, and we repair by increasing the net threshold by $u$ at linear cost $c(u) = u$.  Robustness requires $T_k(i) < \theta_i + u$, which means $u > 0$.  But there is no \emph{smallest} positive number.  We could choose $u = 0.1$, or $u = 0.01$, or $u = 0.001$, each one cheaper than the last, but no feasible repair achieves the infimum cost of zero.  The minimum-cost repair does not exist.

The same issue arises in engineering.  A bridge designed to hold \emph{exactly} the expected load is fragile, and any modeling error could push it past the breaking point.  Engineers solve this by requiring a safety margin, and we do the same.  We introduce an explicit \emph{design margin} $\delta > 0$, chosen by the institutional designer, and require
\begin{equation}
\label{eq:delta-robust}
T_k(i) \;\leq\; \theta_i - \delta
\end{equation}
for every agent $i$.  We call an institution satisfying~\eqref{eq:delta-robust} \emph{$(k,\delta)$-robust}, meaning it can survive any $k$-seed shock with a safety buffer of $\delta$ to spare.  The margin $\delta$ serves two purposes.  First, it represents a genuine safety buffer that absorbs small perturbations in parameters or personnel.  Second, it closes the feasible set, replacing the strict inequality $<$ with the non-strict $\leq$, which guarantees that a minimum-cost repair exists.

\subsection{Baseline-only repair}

We begin with the simplest case.  Suppose the institution adjusts only baseline pressures, leaving all influence relationships unchanged.  Recall that reducing the baseline pressure on agent $i$ by $u_i$ increases the net threshold from $\theta_i$ to $\theta_i + u_i$.  The $(k,\delta)$-robustness condition for agent $i$ after repair is therefore
\begin{equation}
T_k(i) \;\leq\; (\theta_i + u_i) - \delta, \qquad\text{i.e.,}\qquad u_i \;\geq\; T_k(i) - \theta_i + \delta.
\end{equation}
In words, the required repair for agent $i$ is the amount by which the worst-case pressure exceeds the current safety margin, plus the design buffer $\delta$.  If $T_k(i) - \theta_i + \delta \leq 0$, then agent $i$ is already safe and no repair is needed.

The cost of reducing baselines will generally differ from one agent to another: changing incentives for a senior manager may require different resources than adjusting workload for a frontline employee.  We model this with a per-unit cost $c_i > 0$ for each agent $i$, so that reducing agent $i$'s baseline by $u_i$ costs $c_i \, u_i$.  This linear cost model is the simplest reasonable assumption; the mixed-repair formulation in Section~\ref{sec:mixed} relaxes it to allow convex costs.

\begin{proposition}[Optimal baseline-only repair]
\label{thm:baseline-repair}
Fix a design margin $\delta > 0$.  With per-agent linear costs $c_i > 0$ and the constraint $0 \leq u_i \leq \sigma_i$, the minimum-cost baseline-only repair for $(k,\delta)$-robustness is
\begin{equation}
u_i^* = \max\!\big\{0,\; T_k(i) - \theta_i + \delta\big\}, \qquad \text{total cost} \;\;= \sum_i c_i\, u_i^*,
\end{equation}
provided $u_i^* \leq \sigma_i$ for every agent $i$.  If $T_k(i) - \theta_i + \delta > \sigma_i$ for some agent, baseline-only repair is infeasible for that agent.  The problem is separable across agents: each agent's optimal repair depends only on their own $T_k(i)$, $\theta_i$, and $\sigma_i$.
\end{proposition}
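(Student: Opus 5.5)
The plan is to write the repair as an explicit optimization problem and observe that it decouples. Baseline-only repair leaves $W$ unchanged, so each $T_k(i)$ is fixed data, and the repaired net threshold of agent $i$ is $\theta_i + u_i$. The $(k,\delta)$-robustness requirement~\eqref{eq:delta-robust} for the repaired institution is then the family of constraints $u_i \geq T_k(i) - \theta_i + \delta$, one per agent. The problem becomes
\[
\min_{u \in \mathbb{R}^n} \sum_i c_i u_i \quad\text{subject to}\quad u_i \geq T_k(i) - \theta_i + \delta,\;\; 0 \leq u_i \leq \sigma_i \quad (i \in V).
\]
Both the objective and the constraints are sums or conjunctions of terms involving only $u_i$. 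So the problem splits into $n$ independent one-dimensional problems: minimize $c_i u_i$ over the interval $[\max\{0, T_k(i)-\theta_i+\delta\},\, \sigma_i]$. This gives the separability claim directly.

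For each scalar problem, the feasible set is an interval whose lower endpoint is $u_i^* = \max\{0, T_k(i)-\theta_i+\delta\}$. It is nonempty if and only if $u_i^* \leq \sigma_i$. Since $\sigma_i \geq 0$ always holds, the only way emptiness can occur is $T_k(i)-\theta_i+\delta > \sigma_i$, which is exactly the stated infeasibility condition. When the interval is nonempty, $c_i > 0$ means the objective is strictly increasing in $u_i$, so the unique minimizer is the left endpoint $u_i^*$.

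Summing the per-agent minima gives the global minimum $\sum_i c_i u_i^*$. This is valid because a separable objective over a product of feasible sets is minimized by minimizing each coordinate separately. Any feasible $u$ satisfies $u_i \geq u_i^*$ coordinatewise, hence $\sum c_i u_i \geq \sum c_i u_i^*$. Existence of the minimum uses the closed constraints that the margin $\delta$ was introduced to supply.

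The only point needing care, which I consider the main, if mild, obstacle, is justifying that $T_k(i)$ does not depend on $u$. This holds because $T_k(i)$ is computed from row $i$ of $W$ alone, and baseline changes touch only $\sigma$. Equivalently, in the notation of Theorem~\ref{thm:topk}, the quantity $\sigma_i$ cancels from the activation inequality and reappears only through $\theta_i$. Once this is noted, the rest is the elementary coordinatewise argument above.
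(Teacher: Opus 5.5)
Your proposal is correct and follows essentially the same route as the paper's proof: since $W$ is unchanged, each $T_k(i)$ is fixed data, so the constraints and the linear objective decouple by agent, and each $u_i$ is taken at the lower end of its feasible range. You are more explicit than the paper about the upper bound $u_i \le \sigma_i$, which gives the infeasibility condition, and about the coordinatewise bound $\sum_i c_i u_i \ge \sum_i c_i u_i^*$, but the argument is the same.
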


\begin{proof}
Since $T_k(i)$ depends only on $W$, which is unchanged under baseline-only repair, and $u_i$ affects only agent $i$'s net threshold, there is no coupling between agents: each agent's repair problem is independent.  For each agent $i$, the constraint $u_i \geq T_k(i) - \theta_i + \delta$ with linear cost $c_i u_i$ is minimized by taking $u_i$ as small as possible, which gives $u_i^* = \max\{0,\; T_k(i) - \theta_i + \delta\}$.
\end{proof}

Because the problem is separable, the institution can assess and repair each agent independently, without worrying about interactions between repairs.  This makes baseline-only repair computationally straightforward even for large institutions.

\subsection{Influence-only repair}

Now suppose the institution restructures influence relationships, leaving baselines unchanged.  For each pair of agents, the institution can reduce the influence weight $w_{ij}$ by an amount $z_{ij} \geq 0$, at some cost.  Since the $(k,\delta)$-robustness constraint for agent $i$ depends only on the influence weights coming \emph{into} agent $i$---that is, row $i$ of $W$---the constraints decompose agent by agent.

Whether the \emph{optimization} also decomposes depends on how costs are structured.  If the cost of reducing each weight $w_{ij}$ depends only on that weight and no others, then the full problem splits into $n$ independent subproblems, one per agent.  A shared global budget $\sum_{i,j} z_{ij} \leq B$ would link the subproblems, but we do not impose one here.  Such a budget would preserve convexity and could be handled by Lagrangian decomposition.  Even with separable costs, using both levers simultaneously is valuable: for some agents, reducing baseline pressure is cheaper, while for others, restructuring influence is cheaper.  The mixed formulation captures this agent-by-agent flexibility.

\subsection{Mixed repair}
\label{sec:mixed}

In practice, an institution would use both levers at once, reducing some baselines and restructuring some influence relationships simultaneously.  The formulation below optimizes for safety alone; it does not impose a coordination-preservation constraint analogous to $\mathcal{G}(r,\Lambda)$ from Section~\ref{sec:functionality}.  The optimizer may therefore achieve safety at the expense of functionality.  Incorporating a coordination floor is a natural extension but is not needed for the convexity result.  The question is whether the safety-only problem remains tractable.

Write $T_k^{(i)}(W - Z)$ for the top-$k$ incoming influence sum of agent $i$ after the influence matrix has been reduced from $W$ to $W - Z$.  The cost of reducing agent $i$'s baseline by $u_i$ is $c_i(u_i)$, and the cost of reducing the influence of $j$ on $i$ by $z_{ij}$ is $d_{ij}(z_{ij})$.  We allow these cost functions to be any convex, nondecreasing functions, finite on $[0, \infty)$ and hence continuous---linear costs are a special case, but convex costs also capture settings where larger interventions are disproportionately expensive.  The $(k,\delta)$-robust repair problem is:
\begin{equation}
\label{eq:mixed-repair}
\min_{u, Z} \;\; \sum_{i} c_i(u_i) + \sum_{i,j} d_{ij}(z_{ij})
\end{equation}
subject to
\begin{equation}
T_k^{(i)}(W - Z) \;\leq\; \theta_i + u_i - \delta \quad \text{for all } i, \qquad 0 \leq z_{ij} \leq w_{ij}, \qquad 0 \leq u_i \leq \sigma_i.
\end{equation}

The constraints say that, after repair, every agent must be $(k,\delta)$-robust.  The bounds $0 \leq z_{ij} \leq w_{ij}$ mean we can reduce influence weights but not make them negative or add new ones.  The bound $u_i \leq \sigma_i$ means we cannot reduce baseline pressure below zero.  With this bound, the problem may be infeasible: if the required robustness gap exceeds what can be closed by eliminating all baseline pressure and all influence weights, no repair exists within the model.  When the problem is feasible, the following theorem guarantees an optimal solution.

A small example illustrates the tradeoff.  Suppose there are two agents with $k = 1$ and $\delta = 0.1$.  Agent 1 has $T_1(1) = 0.9$ and $\theta_1 = 0.7$, so the robustness gap is $T_1(1) - \theta_1 + \delta = 0.3$.  The institution can close this gap by raising agent 1's net threshold ($u_1 = 0.3$), by reducing the largest incoming influence weight ($z_{1j} = 0.3$), or by some combination.  If baseline reduction costs $\$2$ per unit and influence reduction costs $\$5$ per unit, pure baseline repair costs $\$0.60$ while pure influence repair costs $\$1.50$; the optimizer would favor baseline reduction for this agent.  For a different agent where influence reduction is cheaper, the optimizer would choose the other lever.  The mixed formulation finds the cheapest combination of both levers for each agent.  Because the objective and constraints decompose by agent and no shared budget links the subproblems, the mixed repair problem is perfectly separable into $n$ independent subproblems, just as in the baseline-only case.

The reason to expect convexity is structural.  The robustness constraints are built from top-$k$ sums.  A top-$k$ sum is the maximum of finitely many linear functions, one for each way to choose $k$ entries.  A maximum of linear functions is convex, and a constraint of the form ``convex function $\leq$ linear function'' carves out a convex region.

\begin{theorem}[Convexity of mixed repair]
\label{thm:mixed-repair}
If $c_i$ and $d_{ij}$ are convex and nondecreasing, then the $(k,\delta)$-robust mixed repair problem~\eqref{eq:mixed-repair} is a convex optimization problem.  Any local minimum is also a global minimum, and off-the-shelf convex solvers can compute optimal repairs.  Whenever the feasible set is nonempty, the minimum cost is achieved by some repair, not just approached as a limit.
\end{theorem}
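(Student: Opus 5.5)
The plan is to verify the three ingredients of a convex program: a convex objective, a convex feasible set, and attainment via compactness. I will treat the decision variable as the pair $(u, Z) \in \mathbb{R}^n \times \mathbb{R}^{n \times n}$, with the diagonal entries $z_{ii}$ fixed at zero.

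\emph{Objective.} Each $c_i$ and $d_{ij}$ is a convex function of a single coordinate. Composing with a coordinate projection (a linear map) preserves convexity, and a finite sum of convex functions is convex. So the objective in~\eqref{eq:mixed-repair} is convex on the whole space. The nondecreasing hypothesis is not needed for this step; it serves only the interpretation.

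\emph{Feasible set.} The box constraints $0 \le z_{ij} \le w_{ij}$ and $0 \le u_i \le \sigma_i$ are linear inequalities, so each defines a closed half-space. The robustness constraints are the substantive part. I will write $T_k^{(i)}(W - Z) = \Phi_k\bigl((w_{ij} - z_{ij})_{j \ne i}\bigr)$. This holds because all repaired entries stay nonnegative, so the positive parts coincide with the entries themselves, and $\Phi_k$ reduces to $T_k$.

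The map $Z \mapsto (w_{ij} - z_{ij})_{j \ne i}$ is affine. By Proposition~\ref{prop:phik-convex}, $\Phi_k$ is convex, so the composition is a convex function of $(u, Z)$. This is where the finite-max representation does the work. Equivalently, one can write directly $T_k^{(i)}(W - Z) = \max_{|S| \le k,\, i \notin S} \sum_{j \in S} (w_{ij} - z_{ij})$, which is a maximum of affine functions.

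The constraint then reads: convex function minus $(\theta_i + u_i - \delta)$, which is affine, is at most zero. So each constraint is a sublevel set of a convex function and hence convex. The feasible set is a finite intersection of convex sets, so it is convex. Standard convex-analysis facts \cite{boyd2004} then give that every local minimum is global.

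\emph{Attainment.} I expect this to be the only point needing care, though it is routine. The constraint functions are continuous, since a max of finitely many affine functions is continuous, so each sublevel set is closed. This is precisely where the margin $\delta$ and the non-strict inequality matter, in contrast with the strict-inequality counterexample above. The box constraints make the feasible set bounded, so it is compact.

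The objective is continuous: the cost functions are convex and finite on $[0,\infty)$, which the paper takes to imply continuity. For continuity at the endpoint $0$, I will either invoke the paper's stipulation or note that a convex nondecreasing function finite on $[0,\infty)$ is continuous on the interior and upper semicontinuous issues at $0$ only help minimization. Lower semicontinuity suffices for attainment via Weierstrass. Hence whenever the feasible set is nonempty, the minimum is achieved.
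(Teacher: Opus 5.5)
Your proposal is correct and follows essentially the same route as the paper: the robustness constraint is a convex (max-of-affine) function of $Z$ bounded by an affine function of $u$, the objective is a sum of convex costs, and attainment follows because the closed feasible set sits inside a bounded box. One correction to your fallback continuity argument: an upward jump of a convex cost at the endpoint $0$ would break lower semicontinuity and could prevent attainment, so it does not ``only help minimization''; what rules it out is the nondecreasing hypothesis (convexity gives $\limsup_{x\downarrow 0} c(x) \le c(0)$, and monotonicity gives $c(0) \le c(x)$ for $x > 0$), so that hypothesis does real work in the attainment step rather than serving only interpretation.
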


\begin{proof}
The argument has four steps: show the top-$k$ sum is convex, show the constraints define a convex feasible set, show the objective is convex, and verify the minimum is attained.

\emph{Step 1: the top-$k$ sum is convex.}  The top-$k$ sum of a vector $v = (v_1, \ldots, v_m)$ is the sum of its $k$ largest entries.  We can write this as
\begin{equation}
T_k(v) = \max\left\{\sum_{j=1}^m s_j v_j \;:\; s \in \{0,1\}^m,\; \sum_j s_j = k\right\},
\end{equation}
where the maximum ranges over all ways to select $k$ entries.  For any fixed selection $s$, the expression $\sum_j s_j v_j$ is linear in $v$.  The top-$k$ sum is the maximum over finitely many linear functions, and a maximum of linear functions is always convex.

\emph{Step 2: the constraints define a convex feasible set.}  The constraint for agent $i$ requires $T_k^{(i)}(W - Z) \leq \theta_i + u_i - \delta$.  The left side, $T_k^{(i)}(W - Z)$, is a convex function of the reduction matrix $Z$ because subtracting $Z$ from $W$ is a linear operation and composing a convex function with a linear operation preserves convexity.  The right side, $\theta_i + u_i - \delta$, is linear in $u_i$.  So each constraint asks that a convex function be bounded by a linear function, which carves out a convex region.  The box constraints $0 \leq z_{ij} \leq w_{ij}$ and $u_i \geq 0$ are also convex.  The intersection of all these constraints is therefore convex.  Because every constraint uses $\leq$ rather than $<$, the feasible set is also closed.

\emph{Step 3: the objective is convex.}  Each $c_i(u_i)$ and $d_{ij}(z_{ij})$ is convex by assumption, and a sum of convex functions is convex.

A convex objective over a convex feasible set has no spurious local minima, so every local minimum is also a global minimum.

\emph{Step 4: attainment.}  We show that, when the feasible set is nonempty, the minimum cost is achieved.  The feasible region for $(u, Z)$ is defined by box constraints $0 \leq u_i \leq \sigma_i$ and $0 \leq z_{ij} \leq w_{ij}$ together with closed convex constraints.  This region is a closed subset of a bounded box, hence compact whenever it is nonempty.  In finite dimensions, a continuous function attains its minimum on a compact set, so the objective attains its minimum here.
\end{proof}

This convexity is a structural advantage over generic network intervention problems, which typically require integer programming.  It arises because absorbing dynamics make first-step prevention sufficient.  Theorem~\ref{thm:topk} tells us that the institution is safe if and only if no agent is pushed past their threshold by the worst-case coalition, and this condition involves only top-$k$ sums of the influence matrix rather than multi-step cascade simulations.  The design margin $\delta$ closes the feasible set, and the compactness of the influence-reduction box $0 \leq Z \leq W$ ensures that the optimal repair exists.

The same convexity extends to the signed influence model of Section~\ref{sec:signed}.  Recall from Section~\ref{sec:signed-convexity} that the signed top-$k$ sum $\Phi_k$ is convex.  The signed repair problem replaces $T_k$ with $\Phi_k$ and allows the institution to reduce entries of the signed influence matrix $\mathcal{A}$.  In this worst-case formulation, entries with $a_{ij} \leq 0$ already contribute zero to $\Phi_k$, and making them more negative does not improve the robustness constraint.  With nondecreasing intervention costs, any feasible repair can therefore be modified, without increasing its cost, so that $y_{ij} = 0$ whenever $a_{ij} \leq 0$ and $y_{ij} \leq a_{ij}$ whenever $a_{ij} > 0$; in particular, when an optimal repair exists, one of this form exists.  The signed repair problem should be read as weakening net harm-promoting influence, not as spending repair effort to strengthen already protective channels.

\begin{corollary}[Signed mixed repair is convex]
\label{cor:signed-repair}
If the cost functions $c_i$ and $d_{ij}$ are convex and nondecreasing, then the signed mixed repair problem
\begin{equation}
\min_{u, Y} \;\; \sum_i c_i(u_i) + \sum_{i,j} d_{ij}(y_{ij})
\end{equation}
subject to
\begin{equation}
\Phi_k(\mathcal{A}_{i\cdot} - Y_{i\cdot}) \;\leq\; \theta_i + u_i - \delta \quad \text{for all } i, \qquad Y \geq 0, \qquad 0 \leq u_i \leq \sigma_i,
\end{equation}
is a convex optimization problem.  Whenever the feasible set is nonempty, the minimum cost is attained provided each influence-intervention cost $d_{ij}$ is coercive, meaning $d_{ij}(y) \to \infty$ as $y \to \infty$.
\end{corollary}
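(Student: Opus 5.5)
The proof follows the four-step template of Theorem~\ref{thm:mixed-repair}, with $T_k$ replaced by $\Phi_k$. The main new work is the attainment claim, because the feasible set is no longer bounded in $Y$.

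\emph{Convexity of the constraints.} By Proposition~\ref{prop:phik-convex}, $\Phi_k$ is convex on $\mathbb{R}^m$. The map $(u,Y)\mapsto \mathcal{A}_{i\cdot}-Y_{i\cdot}$ is affine, so $(u,Y)\mapsto \Phi_k(\mathcal{A}_{i\cdot}-Y_{i\cdot})$ is convex. The right-hand side $\theta_i+u_i-\delta$ is affine. Each robustness constraint is therefore a sublevel set of the convex function $\Phi_k(\mathcal{A}_{i\cdot}-Y_{i\cdot})-u_i$ and hence convex. The sets $Y\ge 0$ and $0\le u_i\le\sigma_i$ are polyhedral. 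The feasible set is an intersection of convex sets, so it is convex. It is also closed: $\Phi_k$ is a finite maximum of linear functions and so continuous, and every inequality is non-strict.

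\emph{Convexity of the objective.} The objective is a sum of convex functions of single coordinates, hence convex. Together with the previous step this makes the problem a convex program, so every local minimum is global, exactly as in Theorem~\ref{thm:mixed-repair}.

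\emph{Attainment (main obstacle).} The box $0\le z_{ij}\le w_{ij}$ that gave compactness in Theorem~\ref{thm:mixed-repair} is gone, since $Y$ is only required to be nonnegative. I would use coercivity to recover compactness through a sublevel-set argument.
\begin{itemize}
\item Fix a feasible point $(u^0,Y^0)$ and let $C_0$ be its cost. Restrict attention to the set of feasible points with cost at most $C_0$; this set is nonempty.
\item Each $c_i$ is convex and nondecreasing on $[0,\sigma_i]$, hence bounded below there by $c_i(0)$. Each $d_{ij}$ is nondecreasing on $[0,\infty)$, hence bounded below by $d_{ij}(0)$.
\item Therefore cost at most $C_0$ forces each individual term $d_{ij}(y_{ij})$ to be at most $C_0-\sum_i c_i(0)-\sum_{(i',j')\neq(i,j)}d_{i'j'}(0)$, a finite constant.
\item Coercivity of $d_{ij}$ turns this into a uniform bound $y_{ij}\le R$.
\end{itemize}
The restricted set thus lies in the bounded box $\prod_i[0,\sigma_i]\times[0,R]^{n\times n}$. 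It is closed, being the intersection of the closed feasible set with a sublevel set of a continuous objective. Here continuity holds because finite convex functions on $[0,\infty)$ are continuous, as noted before Theorem~\ref{thm:mixed-repair}. The restricted set is therefore compact, and Weierstrass gives a minimizer. That minimizer is also optimal over the whole feasible set, because every point outside the restricted set costs more than $C_0$.

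Alternatively, one could avoid coercivity by the reduction described just before the corollary: replace each $y_{ij}$ by $\min\{y_{ij},a^+_{ij}\}$, which keeps feasibility and does not raise cost. I would still follow the statement and present the coercivity argument as the main proof. One could remark afterward that this truncation shows coercivity is not really needed.
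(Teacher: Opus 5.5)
Your proposal is correct and follows the paper's own four-step argument (convexity of $\Phi_k$ composed with an affine map, closed convex feasible set, separable convex objective, attainment via coercivity). The only difference is that you write out the coercivity step the paper compresses: a bounded sublevel set, hence compactness and Weierstrass. Your closing remark is also right: truncating $y_{ij}$ to $\min\{y_{ij}, a^+_{ij}\}$ leaves every positive part, and hence every $\Phi_k$ constraint, unchanged without raising cost, so the minimum is attained on a compact box even without the coercivity hypothesis.
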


\begin{proof}
The argument follows the same four steps as Theorem~\ref{thm:mixed-repair}.  \emph{Step 1:} $\Phi_k$ is convex, as established in Section~\ref{sec:signed-convexity}.  \emph{Step 2:} each constraint asks that a convex function of $(u, Y)$ be bounded by a linear function, carving out a convex region.  The intersection of all such constraints with the box constraints $Y \geq 0$, $u \geq 0$ is closed and convex.  \emph{Step 3:} the objective is a sum of convex functions, hence convex.  \emph{Step 4:} when each $d_{ij}$ is coercive in $y_{ij}$, every minimizing sequence has bounded $Y$-components; intersecting a bounded sublevel set with the closed feasible region gives a compact set, and the continuous convex objective attains its minimum there.
\end{proof}

\section{Conclusion}
\label{sec:future}

We developed a mathematical framework for studying how harmful behavior can spread through institutions under absorbing threshold dynamics, in which the same influence channels that enable coordination also transmit harm.  The main results are:

\begin{itemize}

\item \textbf{Exact $k$-robustness criterion} (Theorem~\ref{thm:topk}).  Under absorbing dynamics, an institution is safe against any $k$ initially activated agents if and only if, for every agent, the worst-case first-step pressure from the $k$ strongest peers falls below that agent's net resistance threshold.  Because first-step prevention is sufficient under progressive dynamics, this completely characterizes when cascades can and cannot start.

\item \textbf{Signed influence extension} (Theorem~\ref{thm:signed}).  When some peers exert protective pressure that counteracts harmful influence, the same first-step logic applies.  Robustness is determined by whether each agent's worst-case exposure from the $k$ seeds with the largest net positive influence stays below their threshold.  Protective effects are netted against the harm-promoting influence of the same agent, and agents whose net influence is non-positive are ignored because an adversary would never seed them.

\item \textbf{Convexity of worst-case exposure} (Proposition~\ref{prop:phik-convex}).  The function measuring each agent's worst-case exposure is convex in the influence weights.  In the continuous repair formulation, the feasible safety region is therefore convex.  The repair theory (Theorem~\ref{thm:mixed-repair}) exploits this structure to formulate repair as a convex optimization problem.

\item \textbf{Mean-field cascade threshold} (Proposition~\ref{prop:scalar}).  In a mean-field approximation where individual influence weights are small but numerous, a single summary statistic plays the role of $R_0$ in epidemiology.  When this statistic exceeds one, the harm-free state is locally unstable and a positive-harm fixed point exists.

\item \textbf{Exposure floor and sharp frontier} (Lemma~\ref{lem:exposure-floor}, Theorem~\ref{thm:functionality}).  Any institution that is functional enough to coordinate its members must expose every agent to a minimum amount of worst-case pressure.  There exists an institutional design that is both functional and safe if and only if this unavoidable exposure stays below the weakest agent's resistance threshold.

\item \textbf{Neutrality gap} (Corollary~\ref{cor:neutrality-gap}).  When the same coordination standard is applied uniformly to all agents, it can be safe for a more insulated group while being provably unsafe for a more exposed group.  Within the model, the gap is not a failure of implementation but a mathematical consequence of applying a uniform standard to groups with different minimum thresholds.

\item \textbf{Impossibility of neutral equal robustness} (Proposition~\ref{thm:impossibility}).  In the mean-field two-group model, if the two groups differ in how easily their members are pushed into harmful behavior, no neutral institution can provide equal robustness to both groups simultaneously.

\item \textbf{Homophily destabilizes neutral safety} (Theorem~\ref{thm:homophily}).  When an institution is stable on average but one group is individually unstable, increasing homophily monotonically destabilizes the institution.  There is a unique critical homophily level, given in closed form, above which the harm-free state is unstable.

\item \textbf{Convex mixed repair} (Theorem~\ref{thm:mixed-repair}, Corollary~\ref{cor:signed-repair}).  Repairing an unsafe institution by adjusting both baseline conditions and influence weights is a convex optimization problem, which extends to institutions with protective pressures.

\item \textbf{Optimal baseline-only repair} (Proposition~\ref{thm:baseline-repair}).  When only baseline conditions can be changed, the optimal repair has a closed-form solution.  Reduce the baseline pressure on each agent by the exact amount needed to bring their worst-case exposure below their threshold.

\end{itemize}

The sociological literature on institutional harm---from Milgram's obedience experiments to Vaughan's analysis of the Challenger disaster to Browning's study of Reserve Police Battalion 101---has produced detailed accounts of how institutions push people into harmful behavior.  What it has not produced are formal conditions that can be checked \emph{before} a cascade happens.  The standard institutional response is retrospective, investigating after the harm and reforming after the scandal.  Within the absorbing threshold model developed here, the results above provide prospective criteria.  If an institution's influence structure and its members' resistance thresholds are known, one can determine whether a harmful cascade is possible, identify which agents are most vulnerable, and compute the cheapest set of structural changes that would restore safety.  Whether these quantities can be estimated from institutional data is a major open challenge (see below), but the formal framework clarifies what would need to be measured and why.  The fairness results show a further tension within the model.  When a uniform coordination standard is applied to groups with different minimum thresholds, the resulting harm disparity is not a contingent empirical outcome.  It is a mathematical consequence of the model's structure, one that persists across every institutional design meeting that standard.

Several questions remain open, which we organize into four themes.

\emph{Extending the fairness theory.}  The sharp frontier of Theorem~\ref{thm:functionality} characterizes when coordination and robustness can coexist under a uniform standard.  A natural next step is to ask whether additional fairness axioms, or heterogeneous coordination requirements that vary by agent, yield further impossibility results.  Such extensions would move toward an axiomatic theory of institutional design.  On the technical side, the mean-field fairness results (Proposition~\ref{thm:impossibility} and Theorem~\ref{thm:homophily}) describe large-population behavior; it remains open whether analogous impossibility results hold in the finite-agent model under appropriate conditions.

\emph{Richer dynamics.}  The current model assumes that all seeds arrive simultaneously and that harmful behavior, once adopted, is permanent.  In the current deterministic absorbing model with a fixed seed budget, sequential seed placement does not create a genuinely new robustness problem.  The $T_k$ criterion already characterizes the worst case over all seed sets of bounded size, so the analysis reduces to worst-case seed selection.  Relaxing either assumption opens new territory.  Endogenous thresholds, where repeated exposure gradually lowers $\tau_i$ and captures normalization of deviance, would produce hysteresis and path dependence.  Reversible activation, where agents can stop engaging in harm, would require analysis of multiple equilibria rather than monotone convergence.  A continuous-time formulation, allowing group-specific baselines $s_g$,
\begin{equation}
\dot{h}_g \;=\; -h_g + F_g\!\Big(\sum_{g'} m_{gg'}\, h_{g'};\; s_g\Big)
\end{equation}
would give access to bifurcation analysis and Lyapunov stability theory.

\emph{Combinatorial repair.}  The convex repair theory applies when intervention variables are continuous.  In practice, some interventions are all-or-nothing.  Either reassign an employee or do not; either dissolve a reporting relationship or keep it.  Under a global budget, such combinatorial repair may be computationally intractable in general, but we conjecture tractability on restricted network structures such as trees and bounded-treewidth graphs, where dynamic programming could apply.

\emph{Empirical grounding.}  Connecting internal activation to downstream harm on external populations would make the fairness results more concrete and policy-relevant.  The most fundamental empirical challenge is estimating the model's inputs---activation thresholds $\tau_i$, baseline pressures $\sigma_i$, and influence weights $W$---from institutional data.

Of these, empirical estimation is the binding constraint.  Every criterion derived here---for robustness, repair, and homophily---is a function of thresholds, baseline pressures, and influence weights, and estimating those quantities from institutional data would turn conditional theorems into working diagnostics.

\appendix

\section{Notation}
\label{app:notation}

For reference, the table below collects all notation used in the paper.

\begingroup
\small
\renewcommand{\arraystretch}{1.35}
\begin{longtable}{@{}p{3.4cm}p{9.6cm}@{}}
\caption{Notation used throughout the paper.}
\label{tab:notation}\\
\toprule
\multicolumn{2}{@{}l}{\textbf{Exact finite-agent model}} \\
\midrule
\endfirsthead
\multicolumn{2}{@{}l}{\textit{Table~\thetable\ continued from previous page}} \\
\toprule
\midrule
\endhead
\bottomrule
\endlastfoot
\rowcolor{black!8}\multicolumn{2}{@{}c}{\emph{Agents and thresholds (Sections~\ref{sec:agents} and~\ref{sec:separation})}} \\[2pt]
$V = \{1,\dots,n\}$ & Index set labeling $n$ agents. \\
$\tau_i$ & Activation threshold of agent $i$: minimum total pressure needed to activate $i$. \\
$\sigma_i$ & Baseline institutional pressure on agent $i$, independent of peer behavior. \\
$\theta_i = \tau_i - \sigma_i$ & Net threshold of agent $i$. \\
$\underline{\theta}$ & Minimum net threshold in the population: $\underline{\theta} = \min_i \theta_i$. \\[4pt]
\rowcolor{black!8}\multicolumn{2}{@{}c}{\emph{Institutional structure (Section~\ref{sec:structure})}} \\[2pt]
$W$, $w_{ij}$ & Nonnegative influence matrix with zero diagonal ($w_{ii} = 0$); $w_{ij}$ is the pressure increase on $i$ when $j$ is active. \\
$\mathcal{S} = (\sigma, W)$ & Institutional structure. \\[4pt]
\rowcolor{black!8}\multicolumn{2}{@{}c}{\emph{Dynamics (Section~\ref{sec:dynamics})}} \\[2pt]
$S_0$ & Seed set: agents activated exogenously at $t=0$. \\
$x_i(t)$ & State of agent $i$ at time $t$ ($1$ if active, $0$ otherwise); $x(t) = (x_1(t), \ldots, x_n(t))$. \\[4pt]
$w_i(t)$ & Total social pressure on agent $i$ at time $t$: $w_i(t) = \sum_j w_{ij}\, x_j(t)$. \\
$x^*(S_0)$ & Final state vector.  Component $x^*_i = 1$ if agent $i$ is active at termination, $0$ otherwise. \\[4pt]
\rowcolor{black!8}\multicolumn{2}{@{}c}{\emph{Robustness and repair (Sections~\ref{sec:topk} and~\ref{sec:repair})}} \\[2pt]
$T_k(i)$ & Top-$k$ incoming influence sum: sum of the $k$ largest off-diagonal entries in row $i$ of $W$. \\
$\delta$ & Design margin in $(k,\delta)$-robust repair. \\
$u_i$, $z_{ij}$ & Repair variables: baseline reduction for $i$ and influence reduction for entry $(i,j)$. \\
$Y$, $y_{ij}$ & Signed-influence repair variables in Corollary~\ref{cor:signed-repair}. \\
$c_i$, $d_{ij}$ & Baseline and influence repair costs (per-unit prices in Proposition~\ref{thm:baseline-repair}; convex nondecreasing functions in Section~\ref{sec:mixed}). \\[4pt]
\rowcolor{black!8}\multicolumn{2}{@{}c}{\emph{Signed influence (Section~\ref{sec:signed})}} \\[2pt]
$\mathcal{A}$, $a_{ij}$ & Signed influence matrix and its entries. \\
$w^+_{ij}$, $w^-_{ij}$, $a^+_{ij}$ & Harm-promoting and protective components, with $a_{ij} = w^+_{ij} - w^-_{ij}$; $a^+_{ij} = \max\{a_{ij}, 0\}$ is the positive part. \\
$\widetilde{T}_k(i)$ & Signed top-$k$ sum using off-diagonal positive parts of signed influences. \\
$\Phi_k$ & Sum of the $k$ largest positive components of a signed vector; gives $\widetilde{T}_k(i)$ when applied to row $i$ of $\mathcal{A}$. \\[4pt]
\rowcolor{black!8}\multicolumn{2}{@{}c}{\emph{Fairness and impossibility (Section~\ref{sec:impossibility})}} \\[2pt]
$G_A$, $G_B$ & Two groups: more insulated ($G_A$) and more exposed ($G_B$). \\
$\underline{\theta}_{G_A}$, $\underline{\theta}_{G_B}$ & Minimum net thresholds in each group. \\
$\mathcal{G}(r, \Lambda)$ & Coordination condition: every agent can be mobilized by $\leq r$ peers exerting total pressure $\geq \Lambda$. \\
$\Gamma_{k,r}(\Lambda)$ & Exposure floor: minimum possible $T_k$ under $\mathcal{G}(r,\Lambda)$.  Equals $(k/r)\Lambda$ when $k < r$ and $\Lambda$ when $k \geq r$. \\[8pt]
\midrule
\multicolumn{2}{@{}l}{\textbf{Mean-field and two-group model}} \\
\midrule
\rowcolor{black!8}\multicolumn{2}{@{}c}{\emph{One-group mean-field (Section~\ref{sec:mean-field})}} \\[2pt]
$H$, $H^*$, $H^*_+$ & Overall harm rate, its fixed-point value, and a positive fixed point of $g_0$. \\
$p$ & Seed fraction, $p = |S_0|/n$. \\
$F$, $f$ & Net-threshold CDF and density. \\
$\alpha$ & Mean-field coupling strength. \\
$g_p(h)$ & Mean-field update map, $g_p(h) := p + (1-p)\,F(\alpha h)$; $g_0$ is the $p = 0$ case. \\
$\lambda_0 = \alpha f(0^+)$ & Mean-field instability parameter (analogue of $R_0$). \\[4pt]
\rowcolor{black!8}\multicolumn{2}{@{}c}{\emph{Two-group structure (Section~\ref{sec:mean-field-fairness})}} \\[2pt]
$A$, $B$ & Two groups, corresponding to $G_A$ and $G_B$. \\
$h_A$, $h_B$ & Group-specific harm rates. \\
$F_g$, $f_g$ & Net-threshold CDF and density for group $g \in \{A,B\}$. \\
$s$ & Common baseline pressure level under neutral design. \\
$\lambda_g$, $\eta$ & Exponential-family parameters: $\lambda_g$ is group vulnerability, $\eta$ is baseline sensitivity. \\
$M = (m_{gg'})$ & $2 \times 2$ interaction matrix; $m_{gg'}$ is influence from group $g'$ on group $g$. \\[4pt]
\rowcolor{black!8}\multicolumn{2}{@{}c}{\emph{Stability and homophily (Section~\ref{sec:mean-field-fairness})}} \\[2pt]
$J(s)$, $J(s,\pi)$ & Jacobian of the mean-field map at the harm-free state; $J(s,\pi)$ includes homophily. \\
$\beta_g$ & Total influence capacity of group $g$. \\
$\pi$ & Homophily parameter: proportion of influence within one's own group. \\
$x_g(s) = f_g(0^+;s)\,\beta_g$ & Within-group instability parameter for group $g$. \\
$\mu(s, \pi)$ & Largest eigenvalue of $J(s,\pi)$ in the homophily analysis. \\
$\bar{\pi}(s)$ & Critical homophily level at which $\mu = 1$. \\
$R_g(s)$ & Robustness margin for group $g$ in the scalar-contagion special case. \\
\end{longtable}
\endgroup

\end{document}